\documentclass[reprint,
superscriptaddress,
 aps,
pra,10pt,longbibliography
]{revtex4-2}

\usepackage[dvipsnames]{xcolor}
\usepackage[normalem]{ulem}
\usepackage{graphicx}
\usepackage{dcolumn}
\usepackage{bm}

\usepackage{hyperref}
\hypersetup{colorlinks = true, linkcolor = magenta, urlcolor = blue, citecolor = magenta}
\usepackage{amsmath,amssymb}
\usepackage{tcolorbox}
\usepackage{setspace}
\usepackage{stmaryrd}
\usepackage{amssymb}

\newtheorem{lemma}{Lemma}
\newtheorem{proposition}{Proposition}

\usepackage{braket}
\usepackage{xcolor}
\usepackage{mathtools}

\newenvironment{proof}[1][Proof]{\noindent\textbf{#1.} }{\ \rule{0.5em}{0.5em}}

\graphicspath{{../images/}}
\usepackage[toc,page]{appendix}
\usepackage{mathtools}

\newcommand{\Tr}{\operatorname{Tr}}
\newcommand{\tr}{\operatorname{Tr}}

\newcommand{\curlN}{\mathcal{N}}
\newcommand{\curlM}{\mathcal{M}}

\usepackage{braket}

\def\>{\rangle}
\def\<{\langle}

\usepackage[bottom]{footmisc}

\newcommand{\mmw}[1]{{\color{purple} MMW: #1}}

\allowdisplaybreaks

\begin{document}

\title{Efficiently Computable Strategies and Limits for Bosonic Channel Discrimination}

\author{Zixin Huang}
\email{zixin.huang@rmit.edu.au}
\affiliation{School of Science, STEM College, RMIT University, Melbourne, VIC 3000, Australia}

\author{Ludovico Lami}
\affiliation{Scuola Normale Superiore, Piazza dei Cavalieri 7, 56126 Pisa, Italy}
\affiliation{Korteweg--de Vries Institute for Mathematics, University of Amsterdam, Science Park 105-107, 1098 XG Amsterdam, the Netherlands}

\author{Vishal Singh}
\affiliation{Mathematical Quantum Information RIKEN Hakubi Research Team, 
RIKEN Pioneering Research Institute (PRI) and RIKEN Center for Quantum Computing (RQC), Wako, Saitama 351-0198, Japan}

\author{Mark M. Wilde}\email{wilde@cornell.edu }
\affiliation{School of Electrical and Computer Engineering, Cornell University, Ithaca, New
York 14850, USA}

\begin{abstract}
Discriminating between noisy quantum processes is a central primitive for quantum communication, metrology, and computing. While discrimination limits for finite-dimensional channels are well understood, the continuous-variable setting—particularly under experimentally relevant energy constraints—remains significantly less developed. In this work, we establish an energy-constrained chain rule for the Belavkin–Staszewski channel divergence, which yields a fundamental upper bound on the error exponents achievable by fully adaptive, energy-constrained quantum channel discrimination protocols. 
We then derive efficiently computable bounds on asymmetric error exponents for energy-constrained discrimination of bosonic dephasing and loss-dephasing channels. Specifically, we show that three operationally relevant quantities---the measured relative entropy, the Umegaki relative entropy, and the geometric R\'enyi divergence---admit semidefinite program (SDP) formulations when the input energy is bounded and the Hilbert space is suitably truncated. Applying these tools, we demonstrate that optimal probes for these channels under energy constraints are Fock-diagonal, and we also enable numerically precise evaluation of bounds on achievable error exponents across discrimination strategies ranging from separable to fully adaptive. The resulting SDPs provide practical benchmarks for quantum-limited sensing in low-energy bosonic platforms.
\end{abstract}

\maketitle

\tableofcontents

\section{Introduction}

The ability to distinguish between different physical processes is a central problem in quantum information theory, with implications for quantum communication, metrology, and computing~\cite{bae2015quantum,bae2015quantum,bergou2010discrimination,sacchi2005,Piani2009,harrow2010adaptive}. Quantum channel discrimination generalizes quantum state discrimination to dynamical maps---completely positive, trace-preserving maps---that describe how quantum states evolve under noise, control, or measurement. The goal is to optimally determine which of several candidate channels acted on a probe system.

Dephasing and loss channels play a fundamental role in quantum information theory, as they model the most common noise mechanisms in quantum systems. Loss is the dominant impairment in optical quantum technologies, whereas dephasing governs decoherence in systems such as Bose--Einstein condensates~\cite{khodorkovsky2009decoherence,liu2010quantum}, trapped ions~\cite{valahu2024quantum}, and cold neutral atoms~\cite{carnio2015robust,dorner2012quantum,gross2010nonlinear}. Classes of AC signals can also be mapped to such channels~\cite{dey2024quantum}. In bosonic platforms, dephasing channels describe phase noise in optical and microwave modes, while loss channels capture photon attenuation; both are crucial for realistic modeling of quantum communication and metrology setups. Understanding the ultimate limits of discriminating among these channels is therefore of both theoretical and practical interest, with direct applications to quantum sensing~\cite{PhysRevLett.125.180502,shapiro2020quantum}, communication~\cite{cooney2016strong}, and computation~\cite{PhysRevA.106.032409,PhysRevA.105.032401}.

\begin{figure}[h!t]
\includegraphics[width=\linewidth]{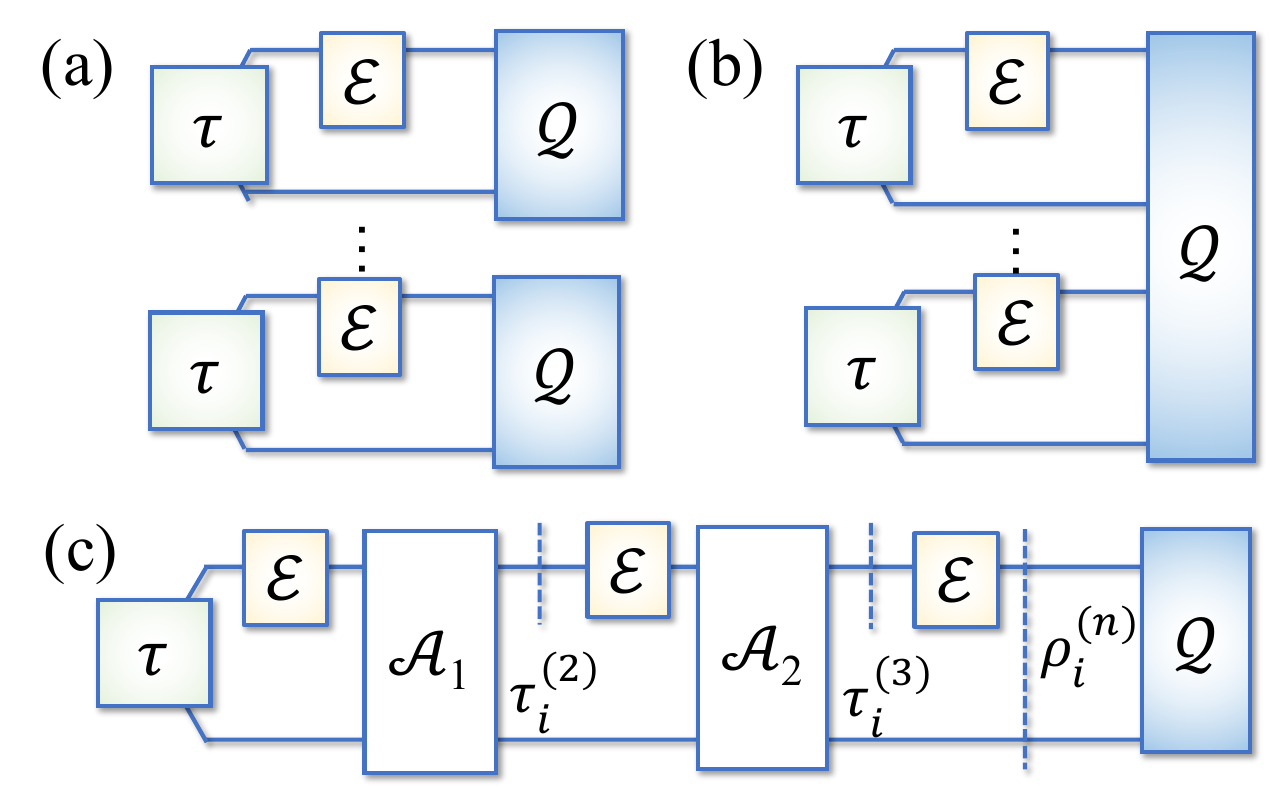}
\caption{\label{fig:scheme} 
Tiered quantum channel discrimination strategies.
The channel $\mathcal{E}$ is applied, chosen from $(\curlN,\curlM )$: (a) the measured relative entropy, requiring only product measurements of the form depicted; (b) the quantum relative entropy, in general requiring a collective measurement;
(c) a general, adaptive protocol for channel discrimination, when channel is called three times. The initial input state is $\tau$, the adaptive operations are 
$\mathcal{A}_1$ and $\mathcal{A}_2$, and the final measurement is~$\mathcal{Q}$. 
The final states are denoted by $\rho_0^{(n)}$ and $\rho_1^{(n)}$, and $n=3$ in this case.}
\end{figure}

Quantum channel discrimination has been studied in the non-asymptotic regime, where one optimizes probe states, measurements, and possible ancilla assistance, and even for the most general adaptive protocols with feedforward~\cite{acin2001statistical,sacchi2005minimum,hayashi2009discrimination,wilde2020amortized,Bergh2024,bergh2023infinite}. In the asymptotic i.i.d.\ setting, performance is quantified by error exponents. In the symmetric regime, the optimal exponent is governed by channel Chernoff--type quantities, whose characterization can depend on whether adaptive strategies are allowed~\cite{hayashi2009discrimination,wilde2020amortized,KatariyaWilde2021,salek2022,ji2025bary}. In the asymmetric (Stein) case, channel versions of the quantum Stein's lemma identify the relevant metric as the channel relative entropy or its regularization~\cite{cooney2016strong,wang2019,wilde2020amortized,fang2020,Bergh2024,bergh2023infinite,bergh2025composite}. 

In continuous-variable systems, discrimination of bosonic Gaussian channels has been analyzed previously~\cite{takeoka2008discrimination,nair2011discriminating,takeoka2016,pirandola2017,nair2022,nair2023}. However, the corresponding limits for non-Gaussian and energy-constrained channels remain largely uncharacterized. Recent work established exact quantum limits for bosonic dephasing channels without imposing energy constraints, showing that the problem reduces to classical statistics over phase-noise distributions~\cite{huang2024exact}. 
In practice, the probe energy is finite---either due to constraints on the mean photon number or to prevent nonlinear effects that distort the channel. 
Incorporating such energy constraints into the channel-discrimination framework is therefore essential.
We note that energy-constrained discrimination of bosonic dephasing channels has been considered recently in the symmetric setting~\cite{oskouei2024discrimination}.

In this work, we compare quantum channel discrimination strategies for different levels of complexity in probe preparation and measurement (Fig.~\ref{fig:scheme}). Here we focus exclusively on the asymmetric setting of channel discrimination with energy constraints. The goal is to distinguish one quantum channel $\curlN$ from another, $\curlM$, by querying the channel $n$ times. 
We analyze the performance of these strategies and show that corresponding exponential error rates can be formulated as semidefinite programs.
This formulation allows for direct computation of minimum-error exponents while naturally incorporating energy constraints on input states. Our results are also informative for Gaussian settings, where energy constraints play a similar role in limiting the distinguishability.

Our approach builds on a sequence of results linking quantum divergences, convex optimization, and channel discrimination. Belavkin and Staszewski~\cite{Belavkin1982} introduced a quantum generalization of relative entropy different from the standard one defined by Umegaki. Matsumoto~\cite{Matsumoto2018} defined the maximal quantum $f$-divergence and provided variational forms well-suited to semidefinite programming. Fang and Fawzi~\cite{FangFawzi2021} extended these constructions to quantum channels via the geometric R\'enyi divergence, yielding computable single-letter bounds. These developments motivate our SDP-based treatment of energy-constrained bosonic channel discrimination, an open setting where such methods yield physically relevant benchmarks.

The structure of this paper is as follows. Section~\ref{sec:q-ch-disc} reviews the framework of quantum channel discrimination, with an emphasis on asymmetric hypothesis testing, relevant divergences (Section~\ref{sec:q-rel-ents}), and bounds on the error exponent (Section~\ref{sec:err-exp-bnds}). Section~\ref{sec:SDPs-channel-rel-ents} reviews how to formulate various channel relative entropies as semidefinite programs. Section~\ref{sec:results} presents our main results: a hierarchy of error exponents corresponding to optimal measurements under energy constraints, followed by numerical evaluations and an analysis of optimal probes. We provide an outlook  in Section~\ref{sec:conclusion}. The appendices provide detailed derivations.

\section{Quantum channel discrimination} 

\label{sec:q-ch-disc}

The task of quantum channel discrimination is to determine whether an unknown channel is $\curlN$ or $\curlM$, given access to $n$ queries and $n-1$ adaptive operations in between (Fig.~\ref{fig:scheme} c). 
A type~I error occurs if one decides $\curlM$ when  $\mathcal{N}$ is given, and a type~II error occurs if ones decides $\mathcal{N}$ when $\curlM$ is given. In the asymmetric setting, the objective is to minimize the type~II error probability $\beta_n$ under a constraint on the type~I error probability. Let $\mathcal{Q}\coloneqq  (Q_0,Q_1)$ denote the quantum measurement performed at the end of the protocol, and let $\rho_0^{(n)}$ and $\rho_1^{(n)}$ denote the states at the {output of the channel}. The error probabilities are given by  
\begin{align}
\alpha_n \coloneqq  \tr[Q_1 \rho_0^{(n)}], \qquad \beta_n \coloneqq \tr[Q_0 \rho_1^{(n)}].
\end{align}
In energy-constrained channel discrimination, a Hamiltonian $H$ (taken to be positive semidefinite for convenience) is specified, as well as an energy constraint $E \geq 0$. The following additional average energy constraint applies for $i \in \{\mathcal{N},\mathcal{M}\}$:
\begin{equation}
    \Tr[H \overline{\tau}_{i}]\leq E, \label{eq:energy-constraint}
\end{equation}
where $\overline{\tau}_{i} \coloneqq \frac{1}{n}\sum_{j=1}^n \tau^{(j)}_i $, with $\tau^{(j)}_i$ being the reduced state at the input of the $j$th channel use (see Fig.~\ref{fig:scheme} c). Let us define the optimal type II error probability under a fixed energy constraint $E$ and type I error probability constraint as follows:
\begin{equation}
    \beta^\star_n(E,\varepsilon) \coloneqq \min_{\mathcal{A}_E^{(n)}} \{ \beta_n : \alpha_n \leq \varepsilon\},
\end{equation}
where the minimization is over every channel discrimination strategy $\mathcal{A}^{(n)}_E$ of the aforementioned form that obeys the energy constraint in~\eqref{eq:energy-constraint}.

\subsection{Quantum relative entropies}

\label{sec:q-rel-ents}

There are several quantities of interest that are helpful in establishing bounds in quantum hypothesis testing. 
The most widely used is the quantum relative entropy~\cite{Ume62}:
\begin{align}
D(\rho \| \sigma) \coloneqq  
&\begin{cases}
\mathrm{Tr}\!\left[ \rho \, (\ln  \rho - \ln  \sigma) \right], & \text{if } \mathrm{supp}(\rho) \subseteq \mathrm{supp}(\sigma), \\
+\infty, & \text{otherwise}.
\end{cases}
\end{align}
Another family of quantum relative entropies consists of 
the geometric R\'enyi divergence (GRD)~\cite{matsumoto2015new,Matsumoto2018}; for $\alpha \in (1,2]$,
\begin{align} \label{eq: state_grd}
\widehat D_\alpha(\rho\|\sigma) & \coloneqq \frac{1}{\alpha-1} \ln  \tr 
\!\left[\sigma\left(\sigma^{-\frac{1}{2}} \rho \sigma^{-\frac{1}{2}}\right)^\alpha\right].
\end{align}
The GRD is also known as the maximal R\'enyi divergence because it is the maximal quantum generalization of the R\'enyi divergence that satisfies additivity and the data-processing inequality~\cite{matsumoto2015new,Matsumoto2018}. 
The GRD is monotonically increasing with $\alpha$ \cite[Prop.~72]{KatariyaWilde2021}. For $\alpha\rightarrow 1$, the GRD converges to the Belavkin--Staszewski relative entropy~\cite[Prop.~79]{KatariyaWilde2021}, defined as~\cite{Belavkin1982}
\begin{align}
\widehat D(\rho \|\sigma) \coloneqq  \tr[\rho \ln (\rho^{1/2}\sigma^{-1}\rho^{1/2})],
\end{align}
 which in turn upper bounds the Umegaki relative entropy~\cite{HiaiPetz1993}.
Another is  the measured relative entropy~\cite{Donald1986,Piani2009measured}:
\begin{align}
\label{eq:mre_states}
D^M(\rho \|\sigma) \coloneqq  \sup_{\left(\Lambda_x\right)_x} \sum_{x} \tr[\Lambda_x\rho] 
\ln\!\left(\frac{\tr[\Lambda_x\rho]}{\tr[\Lambda_x\sigma]}\right),
\end{align}
where each $\Lambda_x$ is a rank-one positive semidefinite operator such that $ \sum_{x} \Lambda_x =\openone$. See App.~\ref{app:q-rel-ents} for more details of quantum relative entropies.

\subsection{Bounds for error exponents of quantum channel discrimination}

\label{sec:err-exp-bnds}

Each of these quantities admits a corresponding formulation for quantum channels with energy-constrained input states, and they either
directly correspond to or can be related to achievable error exponents corresponding to different strategies. Specifically (Fig.~\ref{fig:scheme}), these are 
\begin{enumerate}
\item[(a)] Non-collective measurements: characterized by the energy-constrained measured relative entropy $D^M_{H,E} (\curlN\|\curlM)$,
\item[(b)] Collective measurements: characterized by the energy-constrained relative entropy $D_{H,E}(\curlN\|\curlM)$,
\item[(c)] Adaptive strategies: upper bounded by the energy-constrained Belavkin--Staszewski divergence $\widehat D_{H,E}(\curlN\|\curlM)$ or the energy-constrained GRD $\widehat D_{\alpha, H,E}(\curlN\|\curlM)$ for $\alpha \in (1,2]$, as proven in Section~\ref{sec:BS-GRD-bound}. 
\end{enumerate}

For quantum channels $\curlN$ and $\curlM$, the energy-constrained measured relative entropy, relative entropy, Belavkin--Staszewski relative entropy, and GRD are defined respectively for $E\geq 0$ and a Hamiltonian $H_A$ acting on the channel input system $A$ as~\cite[Eq.~(12.12)]{sharma2018bounding}
\begin{multline}
\label{eq:channel_gen_div}
  \mathbb{D}_{H,E}(\curlN \| \curlM) \coloneqq \\
\sup_{\rho_{RA}:  \tr[H_A\rho_A] \leq E}   \mathbb{D} \!\left(\curlN_{A\rightarrow B}(\rho_{RA}) \| \curlM_{A\rightarrow B}(\rho_{RA})\right), 
\end{multline}
where $\mathbb{D}$ can be $D^M$, $D$, $\widehat{D}$, or $\widehat{D}_\alpha $.
The reference system~$R$ allows us to consider the most general possible input state allowed by quantum mechanics. These divergences each admit variational characterizations. Under an energy constraint and a Fock-space cutoff,   these variational forms reduce to finite-dimensional semidefinite programs (SDPs). 

The following inequalities hold for the above quantities:
\begin{equation}
\begin{aligned}
\label{eq:inequalities}
D^M_{H,E} (\curlN\|\curlM)& \leq  D_{H,E}(\curlN\|\curlM) \\
& 
\leq \lim_{\varepsilon\to 0} \lim_{n\to \infty}\frac{-\ln \beta^\star_n(E,\varepsilon)}{n} .
\end{aligned}    
\end{equation}

\subsection{Semidefinite programs for channel relative entropies}

\label{sec:SDPs-channel-rel-ents}

If we restrict the energy of the input state $\rho_A$ such that $\tr[H \rho_A]\leq E$,  and take the  Hamiltonian to be the photon number operator $H = \sum_{n=0}^\infty n |n\rangle\!\langle{n}|$, the energy-constrained GRD can be written for $\alpha \in (1,2]$ as (see App.~\ref{sec:proof1_grd})
\begin{multline}
\label{eq:EC_geo_ch}
\widehat D_{\alpha,H,E}  (\curlN\|\curlM) = \\
\sup_{\substack{\rho_R:\tr[H\rho] \leq E}} \left\{\frac{1}{\alpha -1}  
                    \ln   \tr[\rho_R \tr_B[G_{1-\alpha}(J_{RB}^{\mathcal{N}},J_{RB}^{\mathcal{M}})]] \right\}
\end{multline}
where $G_t$ is the weighted operator geometric mean, defined by~\cite{pusz1975functional,Kubo1980}
\begin{equation}\label{eq:geo_mean_defn}
    G_t(X,Y) \coloneqq X^{\frac{1}{2}}\!\left(X^{-\frac{1}{2}}YX^{-\frac{1}{2}}\right)^tX^{\frac{1}{2}}, \quad t\in \mathbb{R},
\end{equation}
and $J_{RB}^{\curlN}$ and $J_{RB}^{\curlM}$ are the Choi matrices of the two channels (see, e.g., \cite[Def.~4.1]{khatri2020principles} and~\eqref{eq:Choi_op_defn}). Here $G_{1-\alpha}$ can be computed directly, or, after a Hilbert space truncation, via an SDP approximation indexed by an integer~$\ell$, where we take $\alpha(\ell) \coloneqq 1 + 2^{-\ell}$. See App.~\ref{app:weighted-op-geo-mean} for more details of the weighted operator geometric mean.

After a Hilbert-space truncation, the energy-constrained channel relative entropy in~\eqref{eq:channel_gen_div} (with $\mathbb{D} = D$ therein) can be approximated from below via the following SDP~\cite[Theorem~III.1]{kossmann2024}:
\begin{align} 
\label{sdp:channel_re} 
D_L \coloneqq 
\sup_{\substack{\rho_R\geq 0, \\ Q_k \geq 0}} 
\left\{
\begin{array} [c]{c}
 \sum_{k = 1}^r \tr \left[Q_k (\alpha_k J^\curlN_{RB} +\beta_k J^\curlM_{RB}) \right] \\
    + \ln \lambda + 1 - \lambda \qquad : \\
      Q_k \leq \rho_R \otimes \openone_B, ~\tr[\rho_R H] \leq E, \\ 
    \tr[\rho_R] =1
    \end{array}
\right\}.
\end{align} 
The number of matrices to optimise, $r$, is related to the error via $r = O(\sqrt{\lambda/\varepsilon )}$. 
The parameters $\alpha_k$ and $\beta_k$ can be chosen according to~\cite[Eq.~(6)]{kossmann2025}, and $\lambda$ is the exponential of the max-relative entropy of channels, which can be computed as the SDP
$\inf \{\lambda : J^\curlN_{RB} \leq \lambda J^\curlM_{RB}\}$.

An upper bound approximation of the energy-constrained relative entropy of channels has also been derived in~\cite[Theorem~III.2]{kossmann2024}.  For $k\in \{1,\ldots,r\}$, $\gamma_0 = 1$, $\delta_0 = -1$,
\begin{align}
\label{eq:channel_upper}
D_U \coloneqq 
\inf_{\substack{ x\in\mathbb{R},~y\geq 0, \\ N_0 \in \text{Herm}}} 
\left\{
\begin{array}{c}
x+y E + \ln \lambda+1-\lambda: \\
N_k \geq \gamma_k J^\curlN_{RB} + \delta_k J^\curlM_{RB},\quad
N_k \geq 0 \\
x \openone_R + y H_R \geq \tr_B \left[ \sum_{k=0}^r N_k \right]\\
\end{array}
\right\}
\end{align}
with coefficients $\gamma_k$ and $\delta_k$ chosen as in~\cite[Corollary 1, Appendix E]{kossmann2025}.
See App.~\ref{app:q-rel-ent-channels} for more details of~\eqref{sdp:channel_re} and~\eqref{eq:channel_upper}.

The measured relative entropy for states admits a convenient variational formulation involving the operator logarithm~\cite[Lem.~1 \& Thm.~2]{berta2017variational}. By using a semidefinite approximation of the matrix logarithm~\cite{fawzi2019semidefinite}, Ref.~\cite{huang2024semi}  showed that
the measured channel relative entropy  can be approximated via the following SDP, with error $O(e^{-(m+k)})$:
\begin{multline}
\label{sdp:measured_sdp}
D_{H,E}^{M}(\mathcal{N}\Vert\mathcal{M}) \approx \\
\sup_{\substack{\Omega,\rho>0, \\ \Theta\in \operatorname{Herm}}}\left\{
\begin{array}
[c]{c}
\operatorname{Tr}[\Theta J^\curlN_{RB}]-\operatorname{Tr}[\Omega J^\curlM_{RB}]+1:\\
\operatorname{Tr}[\rho_R]=1,\quad\operatorname{Tr}[H\rho_R]\leq E,\\
T_1,\ldots,T_m,~Z_0,\ldots,Z_k \in \operatorname{Herm},\\
 X = \rho_R \otimes \openone, \qquad Z_0 = \Omega \\
\left \{ \left[
\begin{array}{cc}
 Z_i & Z_{i +1} \\
 Z_{i +1}  & X \\
\end{array}
\right] \geq 0 \right\} _{i = 0}^{k-1}, \\ \\
\sum_{j=1}^m w_j T_j = 2^{-k} \Theta, \\
\left \{ \left[
\begin{array}{cc}
 Z_k-X-T_j & -\sqrt{t_j} T_j \\
 -\sqrt{t_j} T_j  & X - t_j T_j\\
\end{array}
\right] \geq 0 \right\} _{j = 1}^{m}
\end{array}
\right\}  .
\end{multline}
Here $w_j$ and $t_j$ are the weights and nodes for the $m$-point Gauss--Legendre quadrature on the interval $[0,1]$. In App.~\ref{sec:measured_re}, we provide a more detailed proof of~\eqref{sdp:measured_sdp} compared to that given in~\cite{huang2024semi}.

\section{Results}

\label{sec:results}

Our main theoretical result is an energy-constrained chain rule for the Belavkin–Staszewski channel divergence. We then use this to prove the following upper bound on the optimal error exponent for energy-constrained channel discrimination (see App.~\ref{app:up-bnd-ecbs-proof}):
\begin{equation}
\label{eq:type_II_err_ub_bel_stas}
    \lim_{\varepsilon\to 0} \lim_{n\to \infty}\frac{-\ln \beta^\star_n(E,\varepsilon)}{n}
\leq \widehat D_{H,E}(\curlN\|\curlM).
\end{equation}
We also organize the relevant divergences into a hierarchy and compare their performance: Belavkin--Staszewski upper bounds adaptive strategies (known not to be achievable in general), the channel relative entropy characterizes performance achievable with collective measurements, and measured variants correspond to non-collective (non-adaptive) measurement strategies.

\subsection{Belavkin--Staszewski and geometric R\'enyi relative entropy}

\label{sec:BS-GRD-bound}

We start with the Belavkin--Staszewski divergence, because it yields the strongest upper bound on the achievable error exponent.
As mentioned above, a key theoretical contribution of our paper, underlying our results, is an energy-constrained chain rule for the Belavkin--Staszewski channel divergence:
\begin{multline}
\widehat{D} \!\left( \mathcal{N}_{A\to B}(\rho_{RA}) \middle\| \mathcal{M}_{A\to B}(\sigma_{RA}) \right) \\
\leq \widehat{D}(\rho_{RA} \| \sigma_{RA}) + \widehat{D}_{H,E}(\mathcal{N}\|\mathcal{M}),
\label{eq:chain-rule0}
\end{multline}
which holds for every state $\rho_{RA}$ satisfying the energy constraint (i.e., $\operatorname{Tr}[H_A\rho_A]\leq E$). We also prove an $n$-copy generalization of~\eqref{eq:chain-rule0}, as detailed in App.~\ref{app:up-bnd-ecbs-proof}. It plays a central role in bounding the error exponent for adaptive discrimination strategies. In an adaptive protocol, the channel inputs generally depend on previous measurement outcomes, leading to hypothesis-dependent probe states across channel uses. This result shows that, despite this dependence, the divergence accumulated on average in each channel use is bounded by a single-use energy-constrained channel divergence. As a consequence, even the most general fully adaptive discrimination protocols obey the same asymptotic upper bound.

We do not prove that the energy-constrained chain rule applies to the GRD. However, the bound $\widehat{D}\leq \widehat{D}_\alpha$ holds for all $\alpha > 1$, so that we can use the GRD as an upper bound on the Belavkin--Staszewski divergence. 
Indeed, the Belavkin--Staszewski divergence can be approximated from above with increasing accuracy by taking the parameter $\alpha$ of the GRD arbitrarily close to one; this captures the ``best possible'' discrimination rate compatible with energy constraints.

\begin{figure}[t]
  \centering
  \includegraphics[trim= {0 0 0 0cm},clip, width=1.0\linewidth]{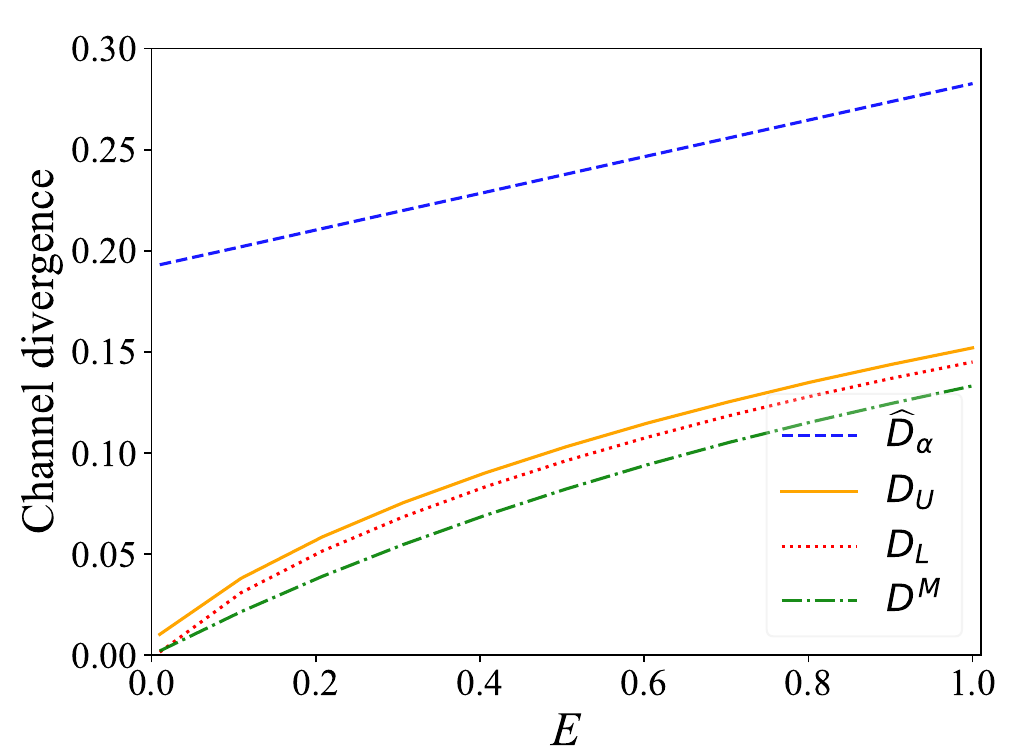}
  \caption{\label{fig:dephasingase_function_E} Channel divergences for a pure dephasing channel as a function of the energy constraint $E$; the parameters are $\gamma_1 = 0.1$ and $\gamma_2 = 0.4$. The dimension of the reduced Hilbert space of the probe satisfies $\dim(R)=9$. }
\end{figure}

\begin{figure}[t]
  \centering
  \includegraphics[trim= {0 0 0 00cm},clip, width=0.9\linewidth]{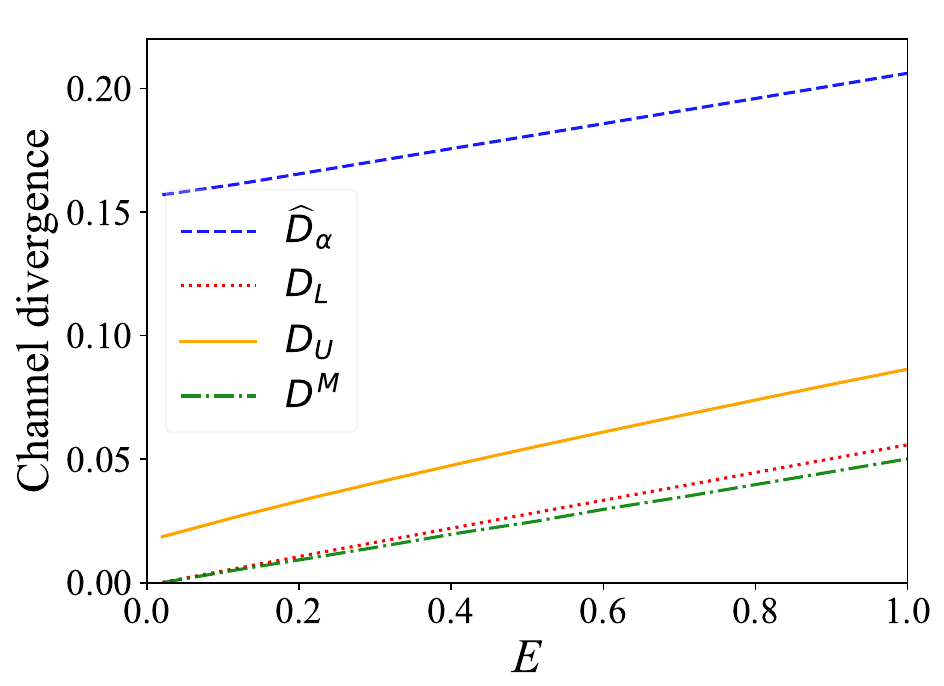}
  \caption{ \label{fig:loss_dephase_function_E}  Relative entropies for a loss-dephasing channel as a function of the energy constraint $E$. In this plot, the transmissivity parameters are $\eta_1 = 0.95$ and $\eta_2 = 0.85$, with a small dephasing parameter $\gamma_1=\gamma_2 = 0.01$. The dimension of the reduced Hilbert space of the probe satisfies $\dim(R)=9$. }
\end{figure}

\begin{figure}[t]
  \centering
  \includegraphics[trim= {0 0 0 00cm},clip, width=0.9\linewidth]{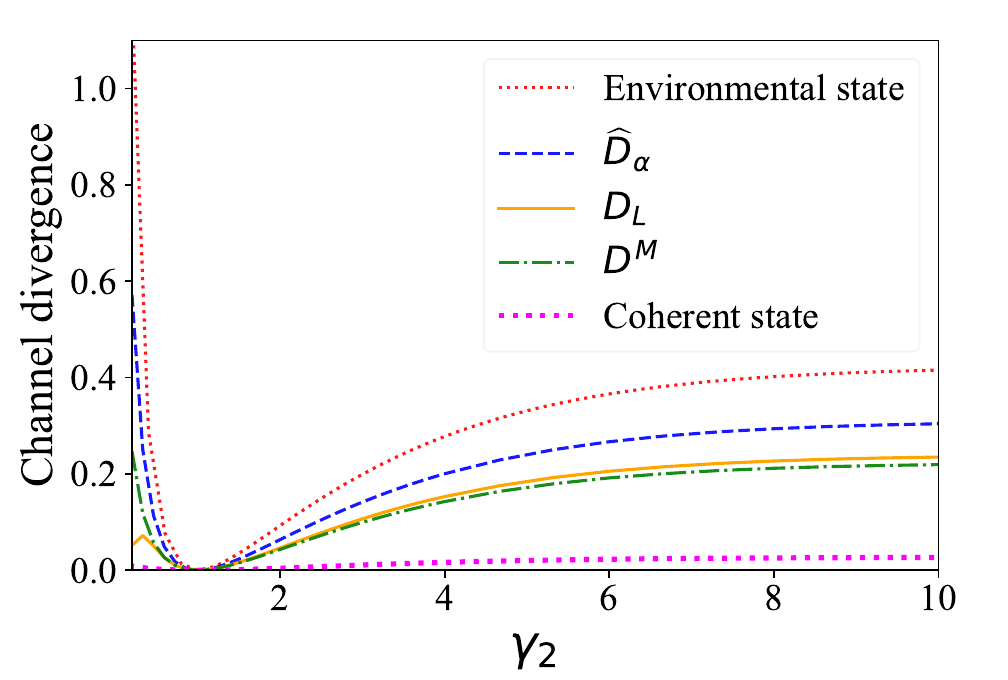}
  \caption{ \label{fig:function_gamma}
  Relative entropies for the bosonic dephasing channel as a function of $\gamma_2$. Here we fix $\gamma_1 = 1$ and the energy constraint $E=0.5$. The environmental state upper bounds the ultimate quantum limit without energy constraints (see Eq.~(31) of Ref.~\cite{huang2024exact}. }
\end{figure}

\subsection{Application to bosonic channels}

We apply the SDPs to the bosonic dephasing channel (BDC) and the loss-dephasing channel. 
The action of a BDC on a state $\rho$ is characterized by the probability density $p(\phi)$ for $\phi \in [-\pi,\pi]$ and can be written as follows:
\begin{align}
    \mathcal{D}_p(\rho) &\coloneqq  \int_{-\pi}^{\pi}p(\phi)e^{-i\hat{n}\phi}\rho e^{i\hat{n}\phi}, \\
     p_{\gamma}(\phi) &\coloneqq \frac{1}{\sqrt{2\pi\gamma}}\sum_{k=-\infty}^{\infty} e^{-\frac{1}{2\gamma}(\phi + 2\pi k)^2},
\end{align}
where $\hat{n}$ is the photon number operator. In the above, we have set the underlying probability density to be the wrapped normal distribution. The action of a pure-loss channel, with transmissivity $\eta$, on an arbitrary input state $\rho_A$ is  as follows:
\begin{equation}\label{eq:loss_ch_defn}
    \mathcal{L}_{\eta}\!\left(\rho_A\right) \coloneqq \operatorname{Tr}_E\!\left[U_{\operatorname{BS}}(\eta)\!\left(\rho_A\otimes |0\rangle\!\langle 0|_E\right)U_{\operatorname{BS}}(\eta)^{\dagger}\right].
\end{equation}
where $U_{\operatorname{BS}}(\eta)\hat{a}^{\dagger} U_{\operatorname{BS}}(\eta)^{\dagger} = \sqrt{\eta}~\hat{a}^{\dagger} + \sqrt{1-\eta}~\hat{b}^{\dagger}$. The bosonic loss-dephasing channel is defined as the serial concatenation of these two channels, i.e., $\mathcal{L}_\eta \circ \mathcal{D}_p = \mathcal{D}_p \circ \mathcal{L}_\eta$, where we note that these channels commute with each other. This channel has been previously studied in the context of communication and discrimination \cite{lami2023exact,leviant2022quantum,huang2024exact,annamele2024}.
We provide more detail of these channels, including their Choi operators, in App.~\ref{sec:CHOI}.

The SDPs for energy-constrained channel divergences can be simplified for bosonic dephasing and pure-loss channels. Indeed, we use a symmetry reduction specific to the bosonic channels considered here: since the channels are phase-insensitive, an energy-constrained optimal input can be taken to be Fock-diagonal, which simplifies the numerical optimization. The reasoning follows from Ref.~\cite{sharma2018bounding} and the discussion around Eqs.~(12.33)–(12.38) therein.

In Figs.~\ref{fig:dephasingase_function_E}-\ref{fig:loss_dephase_function_E}, we 
plot Eqs.~\eqref{eq:EC_geo_ch},~\eqref{sdp:channel_re},~\eqref{eq:channel_upper}, and~\eqref{sdp:measured_sdp}
as a function of the energy constraint $E$, for the dephasing and loss-dephasing channel, respectively. 
 We truncate the Hilbert space at Fock state equal to 8 ($\dim (R)  =9$); for the measured relative entropy, we take $m=k=3$, for the relative entropy we use $r = 13$, and for the GRD we take $\ell = 8$.
As expected, our numerical results  show that discrimination performance improves monotonically with the available energy.

Fig.~\ref{fig:function_gamma} provides a clear illustration of how energy-constrained probe optimisation fundamentally alters the achievable discrimination rates. We plot the relative entropy as a function of the dephasing parameter.  In addition, we compare a coherent state with the same mean photon number $|\alpha|^2 = 0.5$ (pink dotted line) with the ultimate limit without energy constraints, dictated by the bounds proven in Ref.~\cite{huang2024exact}. (Here, by ``coherent state,'' we mean a strategy involving a coherent-state input along with heterodyne detection at the output, as considered in~\cite{huang2024exact}.)
Even at a modest energy constraint on the input of 0.5 photons, the measured relative entropy significantly outperforms the coherent-state strategy.

\begin{figure}
  \centering
  \includegraphics[trim= {0 0 0 0cm},clip, width=1.0\linewidth]{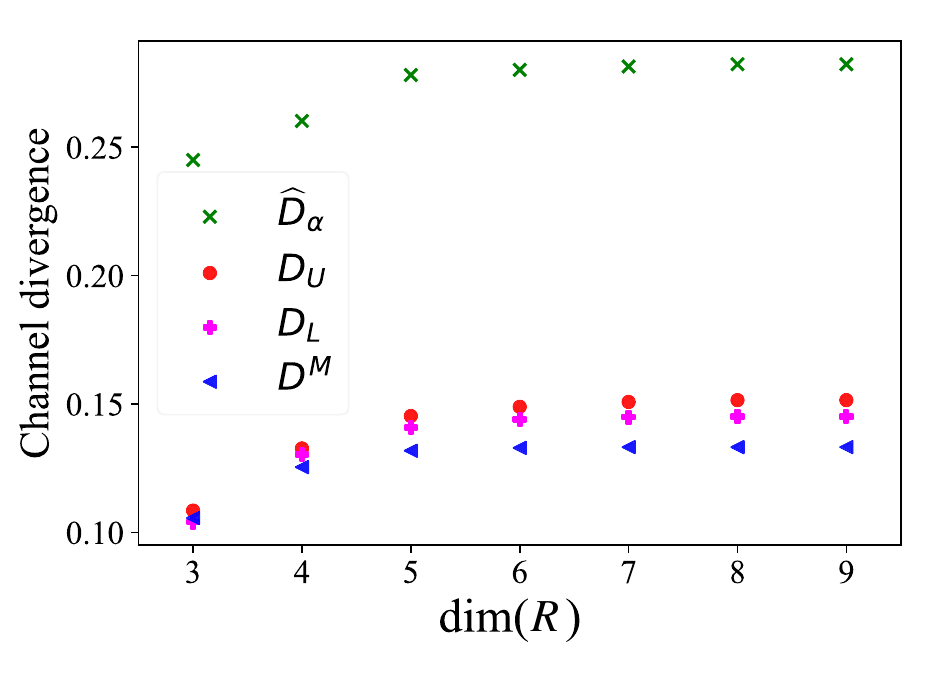}
  \caption{\label{fig:dim_rho_r} Channel divergences for two dephasing channels as a function of the Hilbert space truncation, for $E=1$; the dephasing parameters are $\gamma_1 = 0.1$, and $\gamma_2 = 0.4$.}
\end{figure}

To provide some insights into the optimal probe state for the dephasing channel, we examine the optimal state~$\rho$ generated by the SDP~\eqref{sdp:measured_sdp} -- note that the full input state is a purified version in the joint Hilbert space 
of the channel input and the reference system.
For $\gamma_1 = 0.1$, $\gamma_2 = 0.5$, and $E = 0.5$,  the optimal probe state is a purification of 
$\rho_R = {0.77}|0\rangle\!\langle 0|+ {0.22}|2 \rangle\!\langle 2| + \delta \Pi_{n>2}$, with $\delta \ll1$, and the input probe corresponds to
$|\psi\rangle_{RA} = \sqrt{0.77}\ket{00}+ \sqrt{0.22}\ket{22}+  O(\sqrt{\delta} ) \ket{\varphi_{n>2}}$, where $\ket{\varphi_{n>2}}$ is some state in $\operatorname{span}\{\ket{22}, \ket{33}, \ldots\}$. That is, the SDP selects a state that invests the largest filling fraction on the two-photon term. 
In contrast, when the separation in the dephasing parameter is smaller ($ \gamma_1 = 0.1$ and $\gamma_2 = 0.15$), the optimal probe state is such that more weight is on the three-photon term, which is more sensitive to dephasing:  $\rho_R = {0.83}|0\rangle\!\langle 0|+ {0.03}|2 \rangle\!\langle 2|+ {0.11}|3 \rangle\!\langle 3| + \delta \Pi_{n>5}$.  This reflects the fact that discriminating closely spaced dephasing strengths requires accessing finer phase information, which the SDP accomplishes by allocating energy to higher-photon-number components.

Our computations are performed in a truncated Hilbert space with cutoff $\dim(R)$. We do not currently have an analytic truncation-error bound for the GRD of arbitrary loss-dephasing bosonic channels and obtaining such a bound remains an open problem. However, we can obtain such a bound in the case of two bosonic dephasing channels and provide a detailed proof of such a bound in Appendix~\ref{app:trunc-err-bnd}. As shown in Fig.~\ref{fig:dim_rho_r}, our results are indeed numerically stable with respect to increasing the truncation and the reference dimension used in the SDP, consistent with the analytical bound presented in App.~\ref{app:trunc-err-bnd}.  In particular, we use $E \leq 1$ and $\dim(R)=9$ in our calculations. In Fig.~\ref{fig:dim_rho_r}, we plot the channel divergences for $E = 1$ and observe that they do not change appreciably above dimension equal to $7$. We show a similar figure for the loss-dephasing channel in Appendix~\ref{app:trunc-plot}.

\section{Outlook}

\label{sec:conclusion}

Our results demonstrate that only a small number of photons are required to achieve meaningful discrimination gains. This has immediate implications for photonic quantum communication, frequency metrology, and quantum reading protocols, where limiting the photon number of the input state is crucial.

Future work includes generalising our results to multiple channel discrimination, as well as to composite hypothesis testing. The SDP developed in App.~\ref{app:GRD-SDP}  will be useful in obtaining bounds for the latter task (see~\eqref{eq:final-SDP-GRD-composite} in particular).

In this paper, we have focused on bounding error exponents for various settings of channel discrimination. Going forward, we are confident that our bounds here, combined with the techniques of~\cite{Nuradha2025} could lead to various bounds on the query complexity of energy-constrained channel discrimination, but we leave this investigation for future work. It would also be interesting to investigate bounds for energy-constrained channel discrimination in the symmetric setting, for which the approach of~\cite[Remark~28]{ji2025bary} might end up being applicable.

\begin{acknowledgments}
ZH is supported by an ARC DECRA Fellowship (DE230100144) ``Quantum-enabled super-resolution imaging'' and an RMIT Vice Chancellor's Senior Research Fellowship. LL acknowledges financial support from the European Union under the European Research Council (ERC Grant Agreement No.~101165230). VS thanks the Dieter Schwarz Exchange Programme on Quantum Communication and Security
at the Centre for Quantum Technologies for support. MMW acknowledges support from the National Science Foundation under Grant
No.~2329662. 
\end{acknowledgments}

%

\onecolumngrid

\appendix

\newpage

\section{Quantum relative entropies}

\label{app:q-rel-ents}

The quantum relative entropy, also known as the Umegaki relative entropy~\cite{Ume62}, is defined for a pair of states $\rho$ and $\sigma$ as follows:
\begin{align}
    D(\rho\Vert\sigma) \coloneqq  \left\{
        \begin{array}{l l}
        \tr[\rho (\ln  \rho - \ln  \sigma)] \qquad & \text{if supp} (\rho ) \subseteq \text{supp} (\sigma), \\
        +\infty & \text{otherwise.}
        \end{array}\right.
\end{align}
The $\alpha$-geometric R\'enyi relative entropy of a pair of states is defined as follows~\cite{matsumoto2015new,Matsumoto2018}:
\begin{equation}\label{eq:geo_rel_ent_st_defn}
    \widehat{D}_{\alpha}\!\left(\rho\Vert\sigma\right) \coloneqq \left\{\begin{array}{l l}
        \frac{1}{\alpha - 1}\ln \operatorname{Tr}\!\left[G_{1-\alpha}\!\left(\rho,\sigma\right)\right] & \text{if supp}(\rho)\subseteq\text{supp}(\sigma), \\
        +\infty & \text{otherwise}
    \end{array}\right. \qquad \forall \alpha \in (1,2],
\end{equation}
where $G_{1-\alpha}(\rho,\sigma)$ is the weighted operator geometric mean defined in~\eqref{eq:geo_mean_defn}. 

The geometric R\'enyi relative entropy is monotonically non-decreasing with respect to $\alpha$~\cite[Prop.~72]{KatariyaWilde2021}, and it is a well-behaved divergence for $\alpha \in (0,1)$ as well~\cite{matsumoto2015new, Matsumoto2018,KatariyaWilde2021}. However, we restrict our development to $\alpha\in (1,2]$ for the purpose of this work. In the limit $\alpha \to 1$, the $\alpha$-geometric R\'enyi relative entropy converges to the Belavkin--Staszewski relative entropy~\cite[Prop.~79]{KatariyaWilde2021}, the latter defined as follows~\cite{Belavkin1982}:
\begin{equation}
    \widehat{D}\!\left(\rho\Vert\sigma\right) \coloneqq \operatorname{Tr}\!\left[D_{\operatorname{op}}(\rho\Vert\sigma)\right],
\end{equation}
where $D_{\operatorname{op}}(\cdot\Vert\cdot)$ is the operator relative entropy defined in~\eqref{eq:op_rel_ent_defn}. 
Note that
\begin{align}
\label{eq:associate}
D_{\operatorname{op}}(X_1\otimes X_2\| Y_1\otimes Y_2) = D_{\operatorname{op}}(X_1\|Y_1)\otimes X_2 + X_1 D_{\operatorname{op}}(X_2\|Y_2)
\end{align}
holds for all operators $X_1,X_2\geq 0$ and $Y_1,Y_2 > 0$. Also, in Ref.~\cite[Appendix A]{FangFawzi2021}, it was proved that
\begin{align}
\label{eq:mop}
D_{\operatorname{op}}(MXM^\dag\| MYM^\dag) \leq M D_{\operatorname{op}} (X\|Y) M^\dag
\end{align}
for all $X \geq 0, Y>0$, and all $M$; furthermore, Eq.~\eqref{eq:mop} holds with equality if $M$ is invertible.

The notion of relative entropy has been extended to quantum channels~\cite{cooney2016strong,LKDW18}. The Umegaki relative entropy of quantum channels is defined as follows:
\begin{equation}\label{eq:rel_ent_ch_defn}
    D\!\left(\mathcal{N}_{A\to B}\Vert\mathcal{M}_{A\to B}\right) \coloneqq \sup_{\rho_{RA}}D\!\left(\mathcal{N}_{A\to B}\!\left(\rho_{RA}\right)\middle\Vert \mathcal{M}_{A\to B}\!\left(\rho_{RA}\right)\right),
\end{equation}
where $R$ is a reference system of arbitrary dimension and the supremum is over every state $\rho_{RA}$. Similarly, the $\alpha$-geometric R\'enyi relative entropy of channels is defined as follows:
\begin{equation}\label{eq:geo_ent_ch_defn}
    \widehat{D}_{\alpha}\!\left(\mathcal{N}_{A\to B}\Vert\mathcal{M}_{A\to B}\right) \coloneqq \sup_{\rho_{RA}}\widehat{D}_{\alpha}\!\left(\mathcal{N}_{A\to B}\!\left(\rho_{RA}\right)\middle\Vert \mathcal{M}_{A\to B}\!\left(\rho_{RA}\right)\right) \qquad \forall \alpha \in (1,2].
\end{equation}

In both~\eqref{eq:rel_ent_ch_defn} and~\eqref{eq:geo_ent_ch_defn}, it suffices to restrict the supremum to all pure states with the dimension of the reference system $R$ equal to the dimension of the system $A$ (see~\cite[Section II.C]{LKDW18}). This leads to the following expression for the $\alpha$-geometric R\'enyi relative entropy of channels~\cite[Equation (31)]{FangFawzi2021}:
\begin{equation}\label{eq:geo_ch_div_Choi_mean}
    \widehat{D}_{\alpha}\!\left(\mathcal{N}_{A\to B}\middle\Vert\mathcal{M}_{A\to B}\right) = \sup_{\rho_R}\frac{1}{\alpha - 1}\ln \!\left(\operatorname{Tr}\!\left[\left(\rho_R\otimes \openone_B\right)G_{1-\alpha}\!\left(J^{\mathcal{N}}_{RB},J^{\mathcal{M}}_{RB}\right)\right]\right),
\end{equation}
where the supremum is over every state $\rho_R$ and 
\begin{equation}
\label{eq:Choi_op_defn}
    J^{\mathcal{N}}_{RB} \coloneqq \sum_{i,j=0}^{|A|-1}|i\rangle\!\langle j|_{R}\otimes \mathcal{N}_{A\to B}\!\left(|i\rangle\!\langle j|_{A}\right)
\end{equation}
is the Choi operator of the channel $\mathcal{N}_{A\to B}$ with system $A$ isomorphic to system $R$. The Choi operator of a channel is often expressed using the unnormalized maximally entangled operator, $\Gamma_{RA} \coloneqq \sum_{i,j=0}^{|A|-1}|i\rangle\!\langle j|_R\otimes |i\rangle\!\langle j|_{A}$, as follows:
\begin{equation}\label{eq:Choi_op_max_ent_rep}
    J^{\mathcal{N}}_{RB} = \mathcal{N}_{A\to B}\!\left(\Gamma_{RA}\right).
\end{equation}

In the limit $\alpha \to 1$, the $\alpha$-geometric R\'enyi relative entropy of channels converges to the Belavkin--Staszewski relative entropy of channels~\cite{FangFawzi2021}, which is defined as follows:
\begin{equation}
    \widehat{D}\!\left(\mathcal{N}_{A\to B}\middle\Vert \mathcal{M}_{A\to B}\right) \coloneqq \sup_{\rho_{RA}}\widehat{D}\!\left(\mathcal{N}_{A\to B}\!\left(\rho_{RA}\right)\middle\Vert\mathcal{M}_{A\to B}\!\left(\rho_{RA}\right)\right).
\end{equation}
Using~\eqref{eq:mop} and following the proof of~\eqref{eq:geo_ch_div_Choi_mean} (see Appendix~\ref{sec:proof1_grd} for the derivation with energy constraints), one can obtain the following equality:
\begin{equation}
    \widehat{D}\!\left(\mathcal{N}_{A\to B}\middle\Vert \mathcal{M}_{A\to B}\right) = \sup_{\rho_R}\operatorname{Tr}\!\left[\left(\rho_R\otimes \openone_B\right)D_{\operatorname{op}}\!\left(J^{\mathcal{N}}_{RB}\middle \|J^{\mathcal{M}}_{RB}\right)\right],
    \label{eq:BS-rel-ent-D-op}
\end{equation}
where the supremum is over every full rank state $\rho_R$.

\section{Proof of Equation~\texorpdfstring{\eqref{eq:EC_geo_ch}}{(10)}}

\label{sec:proof1_grd}

In this appendix, we provide a proof of Eq.~\eqref{eq:EC_geo_ch}. To begin with, let us note that
the energy-constrained $\alpha$-geometric R\'enyi relative entropy of channels is defined as follows~\cite[Eq.~(12.12)]{sharma2018bounding}:
\begin{equation}
    \widehat{D}_{\alpha,H,E}\!\left(\mathcal{N}_{A\to B}\middle\Vert\mathcal{M}_{A\to B}\right) \coloneqq \sup_{\substack{\psi_{RA},\\ \operatorname{Tr}[H_A\psi_A] \le E}}\widehat{D}_{\alpha}\!\left(\mathcal{N}_{A\to B}\!\left(\psi_{RA}\right)\middle\Vert\mathcal{M}_{A\to B}\!\left(\psi_{RA}\right)\right),
\end{equation}
where $H_A$ is the Hamiltonian associated with the system and $E$ is the energy constraint. The supremum is over every pure state $\psi_{RA}$, and $\psi_A$ refers to the marginal of the state $\psi_{RA}$ on the system $A$. 

Let $\left\{|i\rangle_A\right\}_{i=0}^{\infty}$ be the eigenbasis of the Hamiltonian $H_A$. Since $\psi_{RA}$ is a purification of $\psi_A$, there exists an isometry $V_{R'\to R}$ such that 
\begin{align}
    |\psi\rangle_{RA} &= \left(V_{R'\to R}\otimes\sqrt{\psi}_A\right)\sum_{i=0}^{\infty}|i\rangle_{R'}|i\rangle_A\\
    &= \left(V_{R'\to R}\sqrt{\psi^T}_{R'}\otimes \openone_A\right)\sum_{i=0}^{\infty}|i\rangle_{R'}|i\rangle_A,
\end{align}
where the second equality follows from the transpose trick \cite[Eq.~(2.2.40)]{khatri2020principles}. As such,
\begin{align}
    \psi_{RA} &= |\psi\rangle\!\langle \psi|_{RA}\\
    &= V_{R'\to R}\sqrt{\psi^T}_{R'}\left(\sum_{i,j=0}^{\infty}|i\rangle\!\langle j|_{R'}\otimes |i\rangle\!\langle j|_A\right)\sqrt{\psi^T}_{R'}\left(V_{R'\to R}\right)^{\dagger}\\
    &= V_{R'\to R}\sqrt{\psi^T}_{R'}\Gamma_{R'A}\sqrt{\psi^T}_{R'}\left(V_{R'\to R}\right)^{\dagger},
\end{align}
where $\Gamma_{R'A}$ is the unnormalized maximally entangled operator defined in~\eqref{eq:Choi_op_max_ent_rep}. Note that the transpose is taken with respect to the eigenbasis of $H_A$. The energy-constrained $\alpha$-geometric R\'enyi relative entropy can thus be expressed as follows:
\begin{align}
    &\widehat{D}_{\alpha,H,E}\!\left(\mathcal{N}_{A\to B}\middle\Vert\mathcal{M}_{A\to B}\right)\notag\\ 
    &= \sup_{\substack{\psi_{R'}\ge 0, \operatorname{Tr}[\psi_{R'}] = 1,\\ \operatorname{Tr}[H_{R'}\psi_{R'}]\le E}}\widehat{D}_{\alpha}\!\left(\mathcal{N}\!\left(V_{R'\to R}\sqrt{\psi^T}_{R'}\Gamma_{R'A}\sqrt{\psi^T}_{R'}\left(V_{R'\to R}\right)^{\dagger}\right)\middle\Vert\mathcal{M}\!\left(V_{R'\to R}\sqrt{\psi^T}_{R'}\Gamma_{R'A}\sqrt{\psi^T}_{R'}\left(V_{R'\to R}\right)^{\dagger}\right)\right)\\
    &= \sup_{\substack{\psi_{R'}\ge 0, \operatorname{Tr}[\psi_{R'}] = 1,\\ \operatorname{Tr}[H_{R'}\psi_{R'}]\le E}}\widehat{D}_{\alpha}\!\left(V_{R'\to R}\sqrt{\psi^T}_{R'}\mathcal{N}\!\left(\Gamma_{R'A}\right)\sqrt{\psi^T}_{R'}\left(V_{R'\to R}\right)^{\dagger}\middle\Vert V_{R'\to R}\sqrt{\psi^T}_{R'}\mathcal{M}\!\left(\Gamma_{R'A}\right)\sqrt{\psi^T}_{R'}\left(V_{R'\to R}\right)^{\dagger}\right)\\
    &= \sup_{\substack{\psi_{R'}\ge 0, \operatorname{Tr}[\psi_{R'}] = 1,\\ \operatorname{Tr}[H_{R'}\psi_{R'}]\le E}}\widehat{D}_{\alpha}\!\left(\sqrt{\psi^T}_{R'}J^{\mathcal{N}}_{R'B}\sqrt{\psi^T}_{R'}\middle\Vert\sqrt{\psi^T}_{R'}J^{\mathcal{M}}_{R'B}\sqrt{\psi^T}_{R'}\right),\label{eq:EC_geo_rel_ent_sup_psd_states}
\end{align}
where the second equality follows from the fact that the channels $\mathcal{N}$ and $\mathcal{M}$ only act on system $A$, and the final equality follows from the isometric invariance of the $\alpha$-geometric R\'enyi relative entropy along with~\eqref{eq:Choi_op_max_ent_rep}. The set of positive definite states with energy $E$ is dense in the set of all states with energy $E$ if $E$ is strictly larger than the minimum eigenvalue of the Hamiltonian $H_A$. For all further discussions, we assume that the energy budget $E$ is strictly larger than the minimum eigenvalue of $H_A$. The $\alpha$-geometric R\'enyi relative entropy of states satisfies desirable continuity properties \cite[App.~A]{Mosonyi2024}. Therefore, we can restrict the supremum in~\eqref{eq:EC_geo_rel_ent_sup_psd_states} to positive definite states with energy  $\leq E$. That is,
\begin{equation}
\label{eq:D10} \widehat{D}_{\alpha,H,E}\!\left(\mathcal{N}_{A\to B}\middle\Vert\mathcal{M}_{A\to B}\right) = \sup_{\substack{\psi_{R'} > 0, \operatorname{Tr}[\psi_{R'}] = 1,\\ \operatorname{Tr}[H_{R'}\psi_{R'}]\le E}}\widehat{D}_{\alpha}\!\left(\sqrt{\psi^T}_{R'}J^{\mathcal{N}}_{R'B}\sqrt{\psi^T}_{R'}\middle\Vert\sqrt{\psi^T}_{R'}J^{\mathcal{M}}_{R'B}\sqrt{\psi^T}_{R'}\right).
\end{equation}
Let us assume that $\operatorname{supp}\!\left(J^{\mathcal{N}}_{R'B}\right)\subseteq \operatorname{supp}\!\left(J^{\mathcal{M}}_{R'B}\right)$. Recalling the definition of $\alpha$-geometric R\'enyi relative entropy of states from~\eqref{eq:geo_rel_ent_st_defn}, we can write
\begin{align}
    \widehat{D}_{\alpha,H,E}\!\left(\mathcal{N}_{A\to B}\middle\Vert\mathcal{M}_{A\to B}\right) &= \sup_{\substack{\psi_{R'} > 0, \operatorname{Tr}[\psi_{R'}] = 1,\\ \operatorname{Tr}[H_{R'}\psi_{R'}]\le E}}\frac{1}{\alpha - 1}\ln\operatorname{Tr}\!\left[G_{1-\alpha}\!\left(\sqrt{\psi^T}_{R'}J^{\mathcal{N}}_{R'B}\sqrt{\psi^T}_{R'},\sqrt{\psi^T}_{R'}J^{\mathcal{M}}_{R'B}\sqrt{\psi^T}_{R'}\right)\right]\\
    &= \sup_{\substack{\psi_{R'} > 0, \operatorname{Tr}[\psi_{R'}] = 1,\\ \operatorname{Tr}[H_{R'}\psi_{R'}]\le E}}\frac{1}{\alpha - 1}\ln\operatorname{Tr}\!\left[\sqrt{\psi^T}_{R'}G_{1-\alpha}\!\left(J^{\mathcal{N}}_{R'B},J^{\mathcal{M}}_{R'B}\right)\sqrt{\psi^T}_{R'}\right]\\
    &= \sup_{\substack{\psi_{R'} > 0, \operatorname{Tr}[\psi_{R'}] = 1,\\ \operatorname{Tr}[H_{R'}\psi_{R'}]\le E}}\frac{1}{\alpha - 1}\ln\operatorname{Tr}\!\left[\left(\psi^T_{R'}\otimes I_B\right)G_{1-\alpha}\!\left(J^{\mathcal{N}}_{R'B},J^{\mathcal{M}}_{R'B}\right)\right]\\
    &= \sup_{\substack{\psi_{R'} > 0, \operatorname{Tr}[\psi_{R'}] = 1,\\ \operatorname{Tr}[H_{R'}\psi_{R'}]\le E}}\frac{1}{\alpha - 1}\ln\operatorname{Tr}\!\left[\psi^T_{R'}\operatorname{Tr}_B\!\left[G_{1-\alpha}\!\left(J^{\mathcal{N}}_{R'B},J^{\mathcal{M}}_{R'B}\right)\right]\right],
\end{align}
where the second equality follows from the transformer equality given in~\eqref{eq:mop} and the penultimate equality follows from the cyclicity of trace. 
Let us define $\rho_{R'} \coloneqq \psi^T_{R'}$. Then,
\begin{align}
    \widehat{D}_{\alpha,H,E}\!\left(\mathcal{N}_{A\to B}\middle\Vert\mathcal{M}_{A\to B}\right) &=\sup_{\rho^T_{R'} > 0, \operatorname{Tr}[\rho^T_{R'}] = 1}\left\{\frac{1}{\alpha - 1}\ln\operatorname{Tr}\!\left[\rho_{R'}\operatorname{Tr}_B\!\left[G_{1-\alpha}\!\left(J^{\mathcal{N}}_{R'B},J^{\mathcal{M}}_{R'B}\right)\right]\right]: \operatorname{Tr}\!\left[H_{R'}\rho^T_{R'}\right] \le E\right\}\\
    &=\sup_{\rho^T_{R'} > 0, \operatorname{Tr}[\rho^T_{R'}] = 1}\left\{\frac{1}{\alpha - 1}\ln\operatorname{Tr}\!\left[\rho_{R'}\operatorname{Tr}_B\!\left[G_{1-\alpha}\!\left(J^{\mathcal{N}}_{R'B},J^{\mathcal{M}}_{R'B}\right)\right]\right]: \operatorname{Tr}\!\left[H^T_{R'}\rho_{R'}\right] \le E\right\}\\
    &=\sup_{\rho^T_{R'} > 0, \operatorname{Tr}[\rho^T_{R'}] = 1}\left\{\frac{1}{\alpha - 1}\ln\operatorname{Tr}\!\left[\rho_{R'}\operatorname{Tr}_B\!\left[G_{1-\alpha}\!\left(J^{\mathcal{N}}_{R'B},J^{\mathcal{M}}_{R'B}\right)\right]\right]: \operatorname{Tr}\!\left[H_{R'}\rho_{R'}\right] \le E\right\}\\
    &=\sup_{\rho_{R'} > 0, \operatorname{Tr}[\rho_{R'}] = 1}\left\{\frac{1}{\alpha - 1}\ln\operatorname{Tr}\!\left[\rho_{R'}\operatorname{Tr}_B\!\left[G_{1-\alpha}\!\left(J^{\mathcal{N}}_{R'B},J^{\mathcal{M}}_{R'B}\right)\right]\right]: \operatorname{Tr}\!\left[H_{R'}\rho_{R'}\right] \le E\right\},\label{eq:EC_geo_ch_final}
\end{align}
where the second equality follows from the fact that the transpose map is self-adjoint, the third equality follows from the fact that $H^T_{R'} = H_{R'}$ because the transpose is with respect to the eigenbasis of the Hamiltonian, and the final equality follows by noting that there exists a unique state $\rho^T_{R'}$ for every state $\rho_{R'}$. Finally, one can relax the positive definite constraint on $\rho_{R'}$ to a positive semidefinite constraint because the objective function of the supremum in~\eqref{eq:EC_geo_ch_final} is continuous with respect to $\rho_{R'}$.

\subsection{Closed form for the energy-constrained Belavkin--Staszewski channel divergence with energy constraint}

Consider now a finite-dimensional quantum system $A$ endowed with a grounded Hamiltonian $H_A$ (i.e., such that the minimum eigenvalue of $H_A$ is equal to zero).
For two channels $\mathcal{N}_{A\to B}$ and $\mathcal{M}_{A\to B}$ and $E > 0$, we define the \emph{energy-constrained Belavkin--Staszewski} (ECBS) channel divergence as follows:
\begin{equation}
\widehat{D}_{H,E}(\mathcal{N}\|\mathcal{M}) \coloneqq  \sup_{\substack{\psi_{RA}, \\ \mathrm{Tr}\left[\psi_A H_A\right] \le E}} \widehat{D}\!\left( \mathcal{N}_{A\to B}(\psi_{RA}) \middle\| \mathcal{M}_{A\to B}(\psi_{RA}) \right), \label{eq:def-ecbs}
\end{equation}
where the supremum is over every bipartite pure state $\psi_{RA}$ such that $\operatorname{Tr}\left[\psi_A H_A\right] \le E$. Similar to the case of energy-constrained $\alpha$-geometric R\'enyi relative entropy of channels, we can assume that system $R$ is isomorphic to system $A$ without loss of generality.

\vspace{3mm}
\noindent
\begin{lemma}\label{lem:ECBS_op_rel_ent_form}
Let $\mathcal{N},\mathcal{M}$ be two quantum channels. It holds that
\begin{equation}
\widehat{D}_{H,E}(\mathcal{N}\|\mathcal{M}) = \sup_{\substack{\rho_R\ge 0, \operatorname{Tr}[\rho_R] = 1,\\ \mathrm{Tr}\left[\rho_R H_R\right] \le E}} \mathrm{Tr} \left[ \rho_R\, D_{\mathrm{op}}\!\left( J^{\mathcal{N}}_{RB}\middle \| J^{\mathcal{M}}_{RB} \right) \right], \label{eq:closed-form}
\end{equation}
where $J^{\mathcal{N}}_{RB}$ and $J^{\mathcal{M}}_{RB}$ are the Choi operators of channels $\mathcal{N}_{A\to B}$ and $\mathcal{M}_{A\to B}$, respectively.
\end{lemma}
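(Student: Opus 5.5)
We will follow the argument given above in Appendix~\ref{sec:proof1_grd} for Eq.~\eqref{eq:EC_geo_ch}, replacing the geometric mean $G_{1-\alpha}$ by the operator relative entropy $D_{\operatorname{op}}$ and the transformer equality for $G_t$ by~\eqref{eq:mop}.

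\emph{Step 1: reduction to Choi operators.} Take any pure state $\psi_{RA}$ with $\operatorname{Tr}[H_A\psi_A]\le E$. As before, write
\[
\psi_{RA} = V_{R'\to R}\sqrt{\psi^T}_{R'}\,\Gamma_{R'A}\,\sqrt{\psi^T}_{R'}\,V^\dagger_{R'\to R},
\]
with the transpose taken in the eigenbasis of $H_A$. Since the channels act only on $A$, and the Belavkin--Staszewski divergence is isometrically invariant, we get
\[
\widehat{D}\!\left(\mathcal{N}(\psi_{RA})\middle\|\mathcal{M}(\psi_{RA})\right) = \widehat{D}\!\left(M J^{\mathcal{N}}_{R'B} M^\dagger \middle\| M J^{\mathcal{M}}_{R'B} M^\dagger\right), \qquad M \coloneqq \sqrt{\psi^T}_{R'}\otimes \openone_B .
\]
The energy constraint becomes $\operatorname{Tr}[H\psi^T]\le E$, because $H^T=H$ in this basis.

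\emph{Step 2: restriction to full-rank inputs.} We restrict to positive definite $\psi_{R'}$. This is where density and continuity are needed. Since $H$ is grounded and $E>0$, positive definite states with energy at most $E$ are dense in the feasible set: mix a little of a full-rank low-energy state with $(1-\delta)\psi$, and use the slack available when the energy is $<E$, or rescale. The values then converge because $\widehat{D}$ is the $\alpha\to1$ limit of $\widehat{D}_\alpha$. In finite dimensions $\widehat{D}$ is continuous along such mixtures, given the support condition $\operatorname{supp}J^{\mathcal{N}}\subseteq\operatorname{supp}J^{\mathcal{M}}$, which we assume as in the proof of Eq.~\eqref{eq:EC_geo_ch}; otherwise both sides are $+\infty$. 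Equivalently, lower semicontinuity together with concavity-type arguments give the "$\ge$" direction.

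For invertible $M$, the equality case of~\eqref{eq:mop} gives
\[
D_{\operatorname{op}}\!\left(MJ^{\mathcal{N}}M^\dagger\middle\|MJ^{\mathcal{M}}M^\dagger\right) = M\, D_{\operatorname{op}}\!\left(J^{\mathcal{N}}\middle\|J^{\mathcal{M}}\right) M^\dagger .
\]
Taking the trace and using cyclicity, the objective equals
\[
\operatorname{Tr}\!\left[(\psi^T_{R'}\otimes\openone_B)\, D_{\operatorname{op}}\!\left(J^{\mathcal{N}}\middle\|J^{\mathcal{M}}\right)\right].
\]

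\emph{Step 3: change of variables.} Set $\rho_R\coloneqq\psi^T_{R'}$. Transposition is a bijection on states that preserves trace, positivity and the energy constraint, so we obtain~\eqref{eq:closed-form} over positive definite $\rho_R$. Because the objective is linear in $\rho_R$, the supremum over positive definite $\rho_R$ equals the supremum over all $\rho_R\ge0$ by density. This proves Lemma~\ref{lem:ECBS_op_rel_ent_form}, with the paper's convention of writing $\operatorname{Tr}[\rho_R\,\cdot\,]$ for $\operatorname{Tr}[(\rho_R\otimes\openone_B)\,\cdot\,]$.

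\emph{Main obstacle.} The hardest step is Step 2. There we must justify that non-full-rank inputs give no larger value: for them the invertibility needed for equality in~\eqref{eq:mop} fails, and~\eqref{eq:mop} then gives only "$\le$". That inequality in fact already yields the direction we need, namely that every feasible $\psi$ is upper bounded by the right-hand side of~\eqref{eq:closed-form}. The reverse direction needs only full-rank $\rho_R$ together with continuity of the linear objective. So we would argue the two inequalities separately, avoiding any delicate continuity claim for $\widehat{D}$ itself.
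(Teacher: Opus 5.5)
Your final plan is the paper's proof. The transformer inequality~\eqref{eq:mop} bounds every feasible input from above. For the reverse direction, the equality case of~\eqref{eq:mop} on full-rank inputs is combined with density of full-rank states and continuity of the linear map $\rho_R\mapsto \operatorname{Tr}[(\rho_R\otimes\openone_B)D_{\operatorname{op}}(J^{\mathcal{N}}\|J^{\mathcal{M}})]$. The Step~2 argument for continuity of $\widehat{D}$ via the $\alpha\to1$ limit is not valid as written, since a pointwise limit of continuous functions need not be continuous. As you yourself conclude, though, that argument is not needed.
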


\begin{proof}
The proof here is similar to the proof of~\eqref{eq:EC_geo_ch}, replacing the transformer equality from~\eqref{eq:transformer_eq} with~\eqref{eq:mop}.
The argument is also similar to the proof of~\cite[Lemma 5]{FangFawzi2021}.  
We write
\begin{align}
\widehat{D}_{H,E}(\mathcal{N}\|\mathcal{M}) 
&= \sup_{\rho:\, \mathrm{Tr}[\rho_A H_A] \le E} \widehat{D}\!\left( \mathcal{N}_{A\to B}(\sqrt{\rho_A} \Gamma_{RA} \sqrt{\rho_A}) \middle\| \mathcal{M}_{A\to B}(\sqrt{\rho_A} \Gamma_{RA} \sqrt{\rho_A}) \right) \\
&= \sup_{\rho:\, \mathrm{Tr}[\rho_A H_A] \le E} \widehat{D}\!\left( \sqrt{\rho^T_R} J^{\mathcal{N}}_{RB} \sqrt{\rho^T_R} \middle\| \sqrt{\rho^T_R} J^{\mathcal{M}}_{RB} \sqrt{\rho^T_R} \right) \\
&\le \sup_{\rho:\, \mathrm{Tr}[\rho_A H_A] \le E} \mathrm{Tr}_{RB} \left[ \rho^T_R\, D_{\mathrm{op}}\!\left( J^{\mathcal{N}}_{RB}\middle \| J^{\mathcal{M}}_{RB} \right) \right] \\
&= \sup_{\rho:\, \mathrm{Tr}[\rho_R H_R] \le E} \mathrm{Tr}_{RB} \left[ \rho_R\, D_{\mathrm{op}}\!\left( J^{\mathcal{N}}_{RB}\middle\| J^{\mathcal{M}}_{RB} \right) \right],
\end{align}
where the steps follow from:
\begin{enumerate}
\item Decomposition of $\ket{\Psi}_{RA}$ into $(U_R \otimes \sqrt{\rho_A})\ket{\Gamma}_{RA}$ and simplifying $U_R$.
\item The identity $(M_A \otimes I_R)\ket{\Gamma}_{RA} = (I_A \otimes M^T_R)\ket{\Gamma}_{RA}$.
\item The transformer inequality~\eqref{eq:mop}.
\item $H_A^T = H_A$ by our basis choice, and renaming $A\to R$.
\end{enumerate}
The reverse inequality follows similarly using the equality case of~\eqref{eq:mop} and continuity.

To conclude the proof, we also observe that
\begin{align}
\widehat D_{H,E} (\curlN \| \curlM) &\geq \sup_{\rho:\, \Tr[\rho_A H_A] \leq E} 
\widehat D(\curlN_{A\rightarrow B}( \sqrt{\rho_A}\Gamma_{RA} \sqrt{\rho_A}) \| \curlM_{A\rightarrow B}( \sqrt{\rho_A}\Gamma_{RA} \sqrt{\rho_A}  ) \\
&=  \sup_{\rho:\, \Tr[\rho_A H_A] \leq E} \widehat D\!\left( \sqrt{\rho_R^T} J^\curlN_{RB}  \sqrt{\rho_R^T} \middle\| \sqrt{\rho_R^T} J^\curlM_{RB}  \sqrt{\rho_R^T}  \right) \\
&=  \sup_{\rho:\, \Tr[\rho_A H_A] \leq E} \tr_{AB} [\rho_R^T D_{\operatorname{op}}(J^\curlN_{RB} \| J^\curlM_{RB})]  \label{step:5}\\
&= \sup_{\rho:\, \Tr[\rho_R H_R] \leq E} \tr_{RB} [\rho_R D_{\operatorname{op}}(J^\curlN_{RB} \| J^\curlM_{RB})]  \label{step:6} ,
\end{align}
where~\eqref{step:5} follows from the equality case of~\eqref{eq:mop}, and for~\eqref{step:6}, we used the continuity of $\rho\rightarrow \tr[\rho_A D_{\operatorname{op}}(J^\curlN_{AB} \| J^\curlM_{AB})].$
\end{proof}

\subsection{Chain rule for the energy-constrained Belavkin--Staszewski channel divergence}

\begin{lemma}
Let $A$ be a quantum system endowed with a grounded Hamiltonian $H_A$.  
Let $\rho_{RA}$ be a bipartite state with $\mathrm{Tr}\!\left[\rho_A H_A\right] \le E$ for some $E > 0$. Let $\mathcal{N}_{A\to B}$ and $\mathcal{M}_{A\to B}$ be arbitrary channels. The following inequality holds for every state $\sigma_{RA}$:
\begin{equation}
\widehat{D}\!\left( \mathcal{N}_{A\to B}(\rho_{RA}) \middle \| \mathcal{M}_{A\to B}(\sigma_{RA}) \right) \le \widehat{D}(\rho_{RA} \| \sigma_{RA}) + \widehat{D}_{H,E}(\mathcal{N}\|\mathcal{M}). \label{eq:chain-rule}
\end{equation}
\end{lemma}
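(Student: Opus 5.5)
The plan is to adapt the Fang--Fawzi chain-rule argument for the Belavkin--Staszewski divergence to the present setting, combining the transformer inequality~\eqref{eq:mop}, the tensor identity~\eqref{eq:associate}, and the closed form of Lemma~\ref{lem:ECBS_op_rel_ent_form}. Throughout, $A$ is finite-dimensional, as in the preceding subsection.

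First, we may assume $\widehat{D}(\rho_{RA}\|\sigma_{RA})<\infty$ and $\widehat{D}_{H,E}(\mathcal{N}\|\mathcal{M})<\infty$, since otherwise there is nothing to prove. We then reduce to the case $\sigma_{RA}>0$ and $J^{\mathcal{M}}_{A'B}>0$. To do so, replace $\sigma$ by $(1-\delta)\sigma+\delta\pi$ and $\mathcal{M}$ by $(1-\delta)\mathcal{M}+\delta\mathcal{R}$, where $\pi$ is the maximally mixed state and $\mathcal{R}$ is the completely depolarizing channel. At the end we let $\delta\to 0$, using lower semicontinuity on the left-hand side together with the behaviour of $D_{\rm op}$ under such perturbations (as in Fang--Fawzi) on the right-hand side. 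This limiting step is routine but needs some care.

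Next, let $A'\cong A$ and write both channel outputs as compressions of tensor products:
\begin{equation*}
\mathcal{N}_{A\to B}(\rho_{RA}) = \langle \Gamma|_{AA'}\left(\rho_{RA}\otimes J^{\mathcal{N}}_{A'B}\right)|\Gamma\rangle_{AA'},\qquad
\mathcal{M}_{A\to B}(\sigma_{RA}) = \langle \Gamma|_{AA'}\left(\sigma_{RA}\otimes J^{\mathcal{M}}_{A'B}\right)|\Gamma\rangle_{AA'} .
\end{equation*}
Since~\eqref{eq:mop} holds for every operator $M$, we apply it with $M=\langle\Gamma|_{AA'}\otimes \openone_{RB}$. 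Combining this with~\eqref{eq:associate} gives
\begin{equation*}
D_{\rm op}\!\left(\mathcal{N}(\rho_{RA})\middle\|\mathcal{M}(\sigma_{RA})\right)\le \langle\Gamma|\left[D_{\rm op}(\rho_{RA}\|\sigma_{RA})\otimes J^{\mathcal{N}}_{A'B} + \rho_{RA}\otimes D_{\rm op}\!\left(J^{\mathcal{N}}_{A'B}\middle\|J^{\mathcal{M}}_{A'B}\right)\right]|\Gamma\rangle .
\end{equation*}
Taking the trace over $RB$ evaluates the two terms separately.

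For the first term, use $\operatorname{Tr}_B J^{\mathcal{N}}_{A'B}=\openone_{A'}$ and $\operatorname{Tr}_{A'}[\langle\Gamma|(X_A\otimes \openone_{A'})|\Gamma\rangle]=\operatorname{Tr}[X_A]$. This yields exactly $\operatorname{Tr}[D_{\rm op}(\rho\|\sigma)]=\widehat{D}(\rho_{RA}\|\sigma_{RA})$.

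For the second term, the transpose trick converts it into $\operatorname{Tr}[(\rho_A^T)_{A'}\,D_{\rm op}(J^{\mathcal{N}}_{A'B}\|J^{\mathcal{M}}_{A'B})]$, with the transpose taken in the eigenbasis of $H_A$. Because $H^T=H$ in that basis, $\operatorname{Tr}[H\rho_A^T]=\operatorname{Tr}[H\rho_A]\le E$. Lemma~\ref{lem:ECBS_op_rel_ent_form} then bounds this term by $\widehat{D}_{H,E}(\mathcal{N}\|\mathcal{M})$.

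Adding the two bounds gives~\eqref{eq:chain-rule}. I expect the main obstacle to be the rigorous use of $D_{\rm op}$ outside the full-support case. Formula~\eqref{eq:associate} needs invertible second arguments, and the compression step needs $\mathcal{M}(\sigma)$ invertible so that the left-hand side is defined through $D_{\rm op}$. The perturbation-and-limit argument must therefore be executed so that the right-hand side converges: the upper bound should pass to the limit via monotonicity or continuity of $D_{\rm op}$ in its second argument, while the left-hand side only needs lower semicontinuity.
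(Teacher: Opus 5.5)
Your argument is the paper's proof: you write both outputs as $\langle\Gamma|(\cdot\otimes J)|\Gamma\rangle$, apply the transformer inequality~\eqref{eq:mop} and the tensor identity~\eqref{eq:associate}, evaluate the two traces via $\Tr_B J^{\mathcal{N}}=\openone$ and the transpose trick, and bound the channel term with Lemma~\ref{lem:ECBS_op_rel_ent_form} using $H^T=H$. Your $\delta$-regularization of $\sigma$ and $\mathcal{M}$ is an extra rigor step that the paper omits, since it applies~\eqref{eq:mop} and~\eqref{eq:associate} without checking invertibility; it goes through because anti-monotonicity of $D_{\rm op}$ in its second argument gives $\widehat{D}(\omega\|(1-\delta)\tau+\delta\tau')\le\widehat{D}(\omega\|\tau)-\ln(1-\delta)$ on the right-hand side, while lower semicontinuity handles the left-hand side.
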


\begin{proof}
The action of a channel $\mathcal{N}_{A\to B}$ on a state $\rho_{RA}$ is given in terms of the Choi operator of the channel as follows:
\begin{equation}\label{eq:ch_action_choi}
    \mathcal{N}_{A\to B}\!\left(\rho_{RA}\right) = \langle \Gamma|_{SA}\rho_{RA}\otimes J^{\mathcal{N}}_{SB}|\Gamma\rangle_{SA},
\end{equation}
where system $S$ is isomorphic to system $R$ and $|\Gamma\rangle_{SA}$ is the unnormalized maximally entangled  vector. Also, recall the well-known identity:
\begin{equation}\label{eq:sandwich_max_ent_eq_tr}
    \operatorname{Tr}_A\!\left[X_{RA}\right] = \langle \Gamma|_{AB}X_{RA}\otimes \openone_B|\Gamma\rangle_{AB}.
\end{equation}
Let $\rho_{RA}$ be a bipartite state with $\operatorname{Tr}\!\left[\rho_A H_A\right] \le E$ for some $E>0$. Now consider that
\begin{align}
  &\widehat{D}\!\left(\mathcal N_{A\to B}(\rho_{RA}) \middle\Vert \mathcal M_{A\to B}(\sigma_{RA})\right)\notag \\ 
  &= \Tr\!\left[D_{\operatorname{op}} \! \left(\mathcal N_{A\to B}(\rho_{RA}) \| \mathcal M_{A\to B}(\sigma_{RA})\right)\right]
  \\
  &= \Tr\!\left[ D_{\operatorname{op}}\!\left(
      \langle \Gamma|_{SA}\rho_{RA}\otimes J^{\mathcal{N}}_{SB}|\Gamma\rangle_{SA}\middle \|
      \langle \Gamma|_{SA}\sigma_{RA}\otimes J^{\mathcal{M}}_{SB}|\Gamma\rangle_{SA}
    \right)\right] \label{eq:chain_rule_step_1}
  \\
  &\le \Tr\!\left[\langle \Gamma |_{SA} 
      D_{\operatorname{op}}\!\left(\rho_{RA}\otimes J^{\mathcal N}_{SB}\middle\| \sigma_{RA}\otimes J^{\mathcal M}_{SB}\right)
      | \Gamma\rangle_{SA}\right] \label{eq:chain_rule_step_2}
  \\
  &= \Tr\!\left[ \langle \Gamma |_{SA}
      \left(
        D_{\operatorname{op}}\!\left(\rho_{RA}\|\sigma_{RA}\right)\otimes J^{\mathcal N}_{SB}
        + \rho_{RA}\otimes D_{\operatorname{op}}\!\left(J^{\mathcal N}_{SB}\middle \| J^{\mathcal M}_{SB}\right)
      \right)
      | \Gamma\rangle_{SA}\right]  \label{eq:chain_rule_step_3}\\
  &= \langle \Gamma |_{SA} 
      \!\left(
        \Tr_{RB}\!\left[D_{\operatorname{op}}(\rho_{RA}\|\sigma_{RA})\otimes J^{\mathcal N}_{SB}\right]
        + \Tr_{RB}\!\left[\rho_{RA}\otimes D_{\operatorname{op}}(J^{\mathcal N}_{SB}\| J^{\mathcal M}_{SB})\right]
      \right)
      | \Gamma\rangle_{SA}  \label{eq:chain_rule_step_4}
  \\
  &= \langle \Gamma |_{SA}
      \!\left(
        \Tr_{R}\!\left[D_{\operatorname{op}}(\rho_{RA}\|\sigma_{RA})\right]\otimes \openone_S
        + \rho_A\otimes \Tr_{B}\!\left[D_{\operatorname{op}}(J^{\mathcal N}_{SB}\| J^{\mathcal M}_{SB})\right]
      \right)
      | \Gamma\rangle_{SA} \label{eq:chain_rule_step_5}
  \\
  &= \Tr\!\left[D_{\operatorname{op}}(\rho_{RA}\|\sigma_{RA})\right]
     + \langle \Gamma|_{SA} \openone_A\otimes \rho^T_S D_{\operatorname{op}}(J^{\mathcal N}_{SB}\| J^{\mathcal M}_{SB})|\Gamma\rangle_{SA}
  \label{eq:chain_rule_step_6} \\
  &= \widehat D (\rho_{RA}\Vert \sigma_{RA}) + \operatorname{Tr}\!\left[\rho^T_S D_{\operatorname{op}}(J^{\mathcal N}_{SB}\| J^{\mathcal M}_{SB})\right]\label{eq:BS_decomp_ch_div_st_div}\\
  &\le \widehat D (\rho_{RA}\Vert \sigma_{RA}) + \widehat{D}_{H,E}(\mathcal N \Vert \mathcal M),
\end{align}
where the first inequality follows from~\eqref{eq:mop}, the third equality follows from~\eqref{eq:associate}, the fifth equality follows from the fact that $\operatorname{Tr}_B\!\left[J^{\mathcal{L}}_{AB}\right] = \openone_A$ for every channel $\mathcal{L}_{A\to B}$, the penultimate equality follows from~\eqref{eq:sandwich_max_ent_eq_tr} and the transpose trick, and the final inequality follows from Lemma~\ref{lem:ECBS_op_rel_ent_form}.
\end{proof}

\section{Weighted operator geometric mean and its properties}

\label{app:weighted-op-geo-mean}

The weighted operator geometric mean of two positive definite operators, $A$ and $B$, is defined for a weight $t\in \mathbb{R}$ as follows~\cite{pusz1975functional,Kubo1980}:
\begin{equation}
    G_t(A,B) = A^{\frac{1}{2}}\!\left(A^{-\frac{1}{2}}BA^{-\frac{1}{2}}\right)^tA^{\frac{1}{2}}.
\end{equation}
It can be extended to positive semidefinite operators by taking the inverse with respect to the support of $A$. 
The weighted operator geometric mean $G_t$ has the following properties (see \cite[Eq.~(7.6.5)]{khatri2020principles} and~\cite{Kubo1980}):
\begin{align}
    G_t(A,B) & = G_{1-t}(B,A) \qquad \forall t\in \mathbb{R},\\
\label{eq:wt_geo_mean_monotonic}
	A_1\le A_2, B_1\le B_2  \implies G_t(A_1,B_1) & \le G_t(A_2,B_2) \qquad \forall t\in [0,1]. 
\end{align}
Furthermore, for all $s,t\in \mathbb{R}$,
    \begin{align}
        G_{s}\!\left(A,G_t(A,B)\right) &= G_{st}(A,B),\label{eq:geo_wt_multiplicative}\\
        G_s\!\left(G_t(A,B),B\right) &= G_{s+t-st}(A,B).
    \end{align}
It satisfies the transformer inequality; i.e., for every linear operator $K$,
    \begin{equation}\label{eq:transformer_ineq}
        G_t(KAK^{\dagger},KBK^{\dagger}) \le KG_t(A,B)K^{\dagger} \qquad \forall t\in [-1,0).
    \end{equation}
    If $K$ is invertible, then
    \begin{equation}\label{eq:transformer_eq}
        G_t(KAK^{\dagger},KBK^{\dagger}) = KG_t(A,B)K^{\dagger} \qquad \forall t\in [-1,0).
    \end{equation}

The weighted operator geometric mean is intricately linked to the relative operator entropy, the latter of which is defined as follows~\cite{fujii1989relative}:
\begin{align}
    D_{\operatorname{op}}(A\Vert B) &\coloneqq -A^{\frac{1}{2}}\ln \!\left(A^{-\frac{1}{2}}BA^{-\frac{1}{2}}\right)A^{\frac{1}{2}}\label{eq:op_rel_ent_defn}\\
    &= A^{\frac{1}{2}}\ln \!\left(A^{\frac{1}{2}}B^{-1}A^{\frac{1}{2}}\right)A^{\frac{1}{2}}.
\end{align}
In particular,
\begin{equation}
    \left.\frac{d}{dt}G_t\!\left(A,B\right)\right|_{t=0} = - D_{\operatorname{op}}\!\left(A\Vert B\right),
\end{equation}
which follows from the fact that $\left. \frac{d}{dt}x^t\right|_{t=0} = \ln  x$.

\section{Quantum relative entropy of channels}

\label{app:q-rel-ent-channels}

Recall the definition of relative entropy of channels from~\eqref{eq:rel_ent_ch_defn}. 
The authors of Ref.~\cite{kossmann2024}  obtained a semidefinite program that computes a lower bound on the energy-constrained relative entropy of channels, which is tight up to an error $\varepsilon$. Let 
\begin{equation}
    c^{\star}\!\left(\mathcal{N}_{A\to B},\mathcal{M}_{A\to B}\right) \coloneqq \sup_{\substack{\rho_R\ge 0,\\ \left(Q_k\right)_{k=1}^r}} 
\left\{
\begin{array}{c}
     \tr[\rho_R  (J^\curlN_{RB}-J^\curlM_{RB} )]
 + \sum_{k=1}^r \tr[Q_k (\alpha_k J^\curlN_{RB} +\beta_k J^\curlM_{RB} )] + \ln \lambda+1 -\lambda :\\
 \tr[\rho_A] =1, \quad \operatorname{Tr}\!\left[H_A\rho_A\right] \le E, \quad 0\leq Q_k \leq \rho_R \otimes \openone_B 
\end{array}\right\}.
\end{equation}
Then~\cite[Theorem III.1]{kossmann2024} states that
\begin{equation}
0 \le c^{\star}\!\left(\mathcal{N}_{A\to B},\mathcal{M}_{A\to B}\right) - D_{H,E}\!\left(\mathcal{N}_{A\to B}\middle\Vert\mathcal{M}_{A\to B}\right)  \le \varepsilon,
\end{equation}
where $\ln\lambda \coloneqq D_{\max}\!\left(J^{\mathcal{N}}\middle\Vert J^{\mathcal{M}}\right)$ and $r = O(\sqrt{\lambda/\varepsilon })$. Here $D_{\max}(\cdot\Vert\cdot)$ refers to the max-relative entropy of channels defined in~\cite[Lemma~12]{wilde2020amortized} and $\lambda$ itself can be computed via the following SDP:
\begin{equation}
    \lambda = \inf\!\left\{\mu : J^{\mathcal{N}}_{RB}\le \mu J^{\mathcal{M}}_{RB} \right\}.
\end{equation}
The coefficients $\alpha_k,\beta_k$ can be calculated by the following: choose a discretisation 
\begin{align}
\mu = t_0 < t_1 <\cdots <t_r =\lambda.
\end{align}
The choice of $t_i$ can be made by employing~\cite[Eq.~(5)]{kossmann2025}. Then
\begin{align}
\alpha_k = \ln \! \left(\frac{t_k}{t_{k+1}}\right), \qquad \beta_k = t_{k+1} -t_k.
\end{align}

An upper bound approximation of the energy-constrained relative entropy of channels has also been derived~\cite{kossmann2024}. For $k\in \{1,\ldots,r\},~\gamma_0 = 1~,\delta_0 = -1$,
\begin{align}
\label{eq:upper}
\inf_{\substack{ x\in\mathbb{R},~y\geq 0 \\ N_0 \in \text{Herm}}} 
\left\{
\begin{array}{c}
x+y E + \ln \lambda+1-\lambda: \\
N_k \geq \gamma_k J^\curlN_{RB} + \delta_k J^\curlM_{RB},\quad
N_k \geq 0 \\
x \openone_R + y H_R \geq \tr_B \left[ \sum_{k=0}^r N_k \right]\\
\end{array}
\right\}
\end{align}
with coefficients $\gamma_k,\delta_k$ chosen as in~\cite[Corollary 1, Appendix E]{kossmann2025}.

\section{Measured relative entropy of states and channels}

\label{sec:measured_re}

The measured relative entropy between two states $\rho$ and $\sigma$ is defined as follows:
\begin{equation}
    D^M\!\left(\rho\middle\Vert\sigma\right) \coloneqq \sup_{\substack{\mathcal{X},\left(\Lambda^x\right)_{x\in \mathcal{X}}}} \left\{\operatorname{Tr}\!\left[\Lambda^x\rho\right]\ln\!\left(\frac{\operatorname{Tr}\!\left[\Lambda^x\rho\right]}{\operatorname{Tr}\!\left[\Lambda^x\sigma\right]}\right): \Lambda^x \ge 0 \  \forall x\in \mathcal{X}, \sum_{x\in \mathcal{X}}\Lambda^x = \openone\right\},
\end{equation}
where the supremum is over every finite set $\mathcal{X}$ and every POVM $\left(\Lambda^x\right)_{x\in \mathcal{X}}$. The following variational expression for the \emph{measured relative entropy} of states was reported in~\cite[Lem.~1 \& Thm.~2]{berta2017variational}:
\begin{equation}
D^M(\rho\|\sigma) = \sup_{\omega \geq 0} \{ \tr\!\left[(\ln \omega)\rho\right] - \tr[\omega \sigma] +1 \},
\end{equation}
which can alternatively be written as follows~\cite{huang2024semi}:
\begin{equation}
D^{M}(\rho\Vert\sigma)=\sup_{\omega >0, \theta \in \operatorname{Herm}}
\left\{  \tr[\theta\rho]-\operatorname{Tr}[\omega\sigma]+1~:~\theta\leq -D_{\operatorname{op}}
(\openone,\omega)\right\}  , 
\end{equation}
where 
$D_{\operatorname{op}}(\cdot\Vert\cdot)$ is the relative operator entropy defined in~\eqref{eq:op_rel_ent_defn}. 

In Ref.~\cite{fawzi2019semidefinite}, the authors obtained a set of semidefinite conditions that closely approximate the set $\left\{\theta: \theta \ge D_{\operatorname{op}}(X,Y)\right\}$ for any given pair of finite-dimensional positive definite operators $X$ and $Y$. The results of~\cite{fawzi2019semidefinite}, when combined with~\eqref{eq:K_cone}, yield a semidefinite program that approximates the measured relative entropy between two given states.

We now turn to the task at hand, which is to efficiently approximate the \emph{energy-constrained measured relative entropy} of channels. Let $A$ be a finite-dimensional quantum system endowed with a grounded Hamiltonian $H_A$. Then the energy-constrained measured relative entropy of channels is defined for a energy budget $E>0$ as follows:
\begin{equation}
    D^M_{H,E}\!\left(\mathcal{N}_{A\to B}\middle\Vert\mathcal{M}_{A\to B}\right) \coloneqq \sup_{\rho_{RA}, \operatorname{Tr}\left[\rho_AH_A\right] \le E} D^M\!\left(\mathcal{N}_{A\to B}\!\left(\rho_{RA}\right)\middle\Vert\mathcal{M}_{A\to B}\!\left(\rho_{RA}\right)\right),
\end{equation}
where the supremum is over every bipartite state $\rho_{RA}$ such that $\operatorname{Tr}\!\left[\rho_AH_A\right] \le E$. Similar to the case of the energy-constrained $\alpha$-geometric R\'enyi relative entropy of channels, we can assume that system $R$ is isomorphic to system $A$ without loss of generality.  In Ref.~\cite{huang2024semi},
the authors showed that the energy-constrained measured relative entropy of channels can be expressed in the following variational form:
\begin{equation}
\label{eq:G7}
D_{H,E}^{M}(\mathcal{N}_{A\to B}\Vert\mathcal{M}_{A\to B})=\sup_{\substack{\Omega,\rho\ge 0,\\ \Theta \in \text{Herm}}}
\left\{
\begin{array}
[c]{c}%
\operatorname{Tr}[\Theta_{RB} J^\curlN_{RB}]-\operatorname{Tr}[\Omega_{RB}
J^\curlM_{RB}]+1:\\
\operatorname{Tr}[\rho_R]=1,\quad\operatorname{Tr}[H_R\rho_R]\leq E,\\
\Theta_{RB}\leq -D_{\operatorname{op}}(\rho_R\otimes \openone _B,\Omega_{RB})
\end{array}
\right\}  .
\end{equation}
Now we can employ the results from Ref.~\cite{fawzi2019semidefinite} to obtain a semidefinite program that approximates the quantity in~\eqref{eq:G7}, which leads to the following SDP approximation of the energy-constrained measured relative entropy of channels:
\begin{equation}\label{eq:meas_ch_RE_SDP} 
D^M_{H,E}\!\left(\mathcal{N}\middle\Vert\mathcal{M}\right) \approx 
\sup_{\substack{\Omega \geq0,\rho\geq0, \\ \Theta \in \operatorname{Herm} } }
\left\{ 
\begin{array}
[c]{c}
\tr[\Theta J^\curlN_{RB}]-\Tr[\Omega J^\curlM_{RB}]+1:\\
\tr[\rho]=1,\quad 
\tr [H\rho]\leq E,\\ \\ 
T_1,\ldots,T_m,~Z_0,\ldots,Z_k \in \operatorname{Herm,}\\ 
X = \rho\otimes \openone, \quad Z_0 = \Omega,  \quad
\left \{ \left[
\begin{array}{cc}
 Z_i & Z_{i +1} \\
 Z_{i +1}  & X \\
\end{array}
\right] \geq 0 \right\} _{i = 0}^{k-1},
\\ \\
 \sum_{j=1}^m w_j T_j = 2^{-k} \Theta,\qquad
\left \{ \left[
\begin{array}{cc}
 Z_k-X-T_j & -\sqrt{t_j} T_j \\
 -\sqrt{t_j} T_j  & X - t_j T_j\\
\end{array}
\right] \geq 0 \right\} _{j = 1}^{m} 
\end{array}
\right\}.  
\end{equation}
In the above expression, $\{w_j\}_{j=1}^m$ and $\{t_j\}_{j=1}^m$ are the weights and nodes of the $m$-point Gauss-Legendre quadrature on the interval $[0,1]$, and $k$ is an integer that is picked to achieve the desired accuracy (see App.~\ref{sec:P_ln_approx} for more details). The semidefinite program in~\eqref{eq:meas_ch_RE_SDP} was also presented in~\cite{huang2024semi} along with a proof. We still include a proof sketch of~\eqref{eq:meas_ch_RE_SDP} in App.~\ref{sec:P_ln_approx} for the convenience of the reader. 

\subsection{Approximating the perspective of the operator logarithm \texorpdfstring{$P_{\ln}$}{P ln}}

\label{sec:P_ln_approx}

In this section, we present a semidefinite approximation of the set $\left\{\Theta_{AB}: \Theta_{AB}\le -D_{\operatorname{op}}\!\left(\rho_A\otimes I_B, \Omega_{AB}\right) \right\}$, which is the only missing piece required to go from the variational expression in~\eqref{eq:G7} to the semidefinite program in~\eqref{eq:meas_ch_RE_SDP}. The semidefinite approximation of this set was discovered in~\cite{fawzi2019semidefinite}, and we include a concise version of the proof here for completeness.

\noindent
We begin with the following integral representation of the logarithm function:
\begin{align}
\ln(z) = \int_0^1 f_t(z) \, dt, \qquad
f_t(z) \coloneqq \frac{z-1}{t(z-1)+1}.
\end{align}
Note that $f_t(z)$ has the following semidefinite representation:
\begin{align}
\label{eq:f_t_scalar}
f_t(z)\geq \tau \qquad \iff \qquad
\left[
\begin{array}{cc}
 z-1-\tau & -\sqrt{t} \tau \\
 -\sqrt{t} \tau & 1-t\tau \\
\end{array}
\right] \geq 0,
\end{align}
which can be verified using the Schur complement lemma.

Let $\{w_j\}_{j=1}^m$ and $\{t_j\}_{j=1}^m$ be the weights and nodes of the $m$-point Gauss--Legendre quadrature on the interval $[0,1]$. Then,
\begin{align}
\ln(z) \approx \sum_{j=1}^m w_j f_{t_j}(z) \eqqcolon r_m(z),
\end{align}
where the difference between $\ln(z)$ and its approximation $r_m(z)$ decreases exponentially with increasing $m$ (see~\cite{fawzi2019semidefinite} for details). Importantly, $r_m(z)$ has a semidefinite representation.

The approximation is tighter when $z$ is close to one. Now consider the following function:
\begin{align}
    r_{m,k}(z) &\coloneqq 2^kr_m\!\left(z^{1/2^k}\right)\\
    &\approx 2^k\ln\!\left(z^{1/2^k}\right)\\
    &=\ln(z).
\end{align}
Since $z^{1/2^k}$ is closer to one than $z$, we have $\left|r_m\!\left(z^{1/2^k}\right) - \ln\!\left(z^{1/2^k}\right)\right| \le \left|r_m(z) - \ln(z)\right|$, making $r_{m,k}(z)$ a better approximation of $\ln(z)$ than $r_m(z)$.

Applying the same analysis to a positive definite operator $Z$, we have that
\begin{align}
\ln (Z) \approx 2^k\sum_{j = 1} ^m w_j f_{t_j}(Z^{2^{-k}}),
\end{align}
and
\begin{equation}\label{eq:ln_SDP_rep_op}
    f_t(Z) \ge \widetilde{T} \iff \left[
    \begin{array}{cc}
        Z - \openone - \widetilde{T} & -\sqrt{t}\widetilde{T} \\
         -\sqrt{t}\widetilde{T}& \openone - t\widetilde{T} 
    \end{array}
    \right] \ge 0 \qquad \forall t\in [0,1], \ \widetilde{T}\in \operatorname{Herm}.
\end{equation}

We are interested in the operator relative entropy of two positive semidefinite operators, and since the operator relative entropy has desirable continuity properties, we will restrict our discussion to positive definite operators \cite{fujii1989relative}.

The operator relative entropy can be approximated as follows:
\begin{equation}\label{eq:op_rel_ent_sdp_fn}
    D_{\operatorname{op}}(X\Vert Y) = -X^{\frac{1}{2}}\ln\!\left(X^{-\frac{1}{2}}YX^{-\frac{1}{2}}\right)X^{\frac{1}{2}}\approx -2^k\sum_{j=1}^m w_jX^{\frac{1}{2}}f_{t_j}\!\left(\left(X^{-\frac{1}{2}}YX^{-\frac{1}{2}}\right)^{2^{-k}}\right)X^{\frac{1}{2}}.
\end{equation}

Note that
\begin{equation}
    \left(X^{-\frac{1}{2}}YX^{-\frac{1}{2}}\right)^{2^{-k}} = G_{2^{-k}}\!\left(\openone,X^{-\frac{1}{2}}YX^{-\frac{1}{2}}\right),
\end{equation}
where $G_t(\cdot,\cdot)$ is the weighted operator geometric mean defined in~\eqref{eq:geo_mean_defn}. The weighted operator geometric mean of two positive definite operators has a semidefinite representation, which implies that for all $k+1$ Hermitian operators $\widetilde{Z}_0,\widetilde{Z}_1,\ldots, \widetilde{Z}_k$:
\begin{equation}
    \left\{\left[\begin{array}{cc}
       \openone  & \widetilde{Z}_{i+1} \\
        \widetilde{Z}_{i+1} & \widetilde{Z}_i
    \end{array}\right]\ge 0\right\}_{i=0}^k \land \widetilde{Z}_0 = X^{-\frac{1}{2}}YX^{-\frac{1}{2}} \iff \widetilde{Z}_j \le G_{2^{-j}}\!\left(\openone,X^{-\frac{1}{2}}YX^{-\frac{1}{2}}\right) \qquad \forall j\in \left\{0,1,\ldots,k\right\}.
\end{equation}
The above equation is not in a semidefinite representable form, since we have the term $X^{-\frac{1}{2}}YX^{-\frac{1}{2}}$. Let us define $Z_i \coloneqq X^{\frac{1}{2}}\widetilde{Z}_iX^{\frac{1}{2}}$. Since $X$ is positive definite, we can write
\begin{equation}
    \left\{\left[\begin{array}{cc}
       X  & Z_{i+1} \\
        Z_{i+1} & Z_i
    \end{array}\right]\ge 0\right\}_{i=0}^k \land Z_0 = Y \iff Z_j \le X^{\frac{1}{2}}G_{2^{-j}}\!\left(\openone,X^{-\frac{1}{2}}YX^{-\frac{1}{2}}\right)X^{\frac{1}{2}} \qquad \forall j\in \left\{0,1,\ldots,k\right\},
\end{equation}
which shows that $X^{\frac{1}{2}}\left(X^{-\frac{1}{2}}YX^{-\frac{1}{2}}\right)^{2^{-k}}X^{\frac{1}{2}}$ has a semidefinite representation.

Now, let us go back to the function $f_t(\cdot)$. Note that $\widetilde{Z}_k = \left(X^{-\frac{1}{2}}YX^{-\frac{1}{2}}\right)^{2^{-k}}$. From~\eqref{eq:ln_SDP_rep_op} we know that for every $t\in [0,1]$ and every Hermitian operator $\widetilde{T}$
\begin{align}
      f_t\!\left(\widetilde{Z}_k\right) \ge \widetilde{T} &\iff \left[\begin{array}{cc}
        \widetilde{Z}_k - \openone - \widetilde{T} & -\sqrt{t}\widetilde{T} \\
         -\sqrt{t}\widetilde{T}& \openone - t\widetilde{T} 
    \end{array}
    \right] \ge 0\\
    &\iff \left[
    \begin{array}{cc}
        X^{\frac{1}{2}} & 0 \\
         0 & X^{\frac{1}{2}}
    \end{array}
    \right]\left[\begin{array}{cc}
        \widetilde{Z}_k - \openone - \widetilde{T} & -\sqrt{t}\widetilde{T} \\
         -\sqrt{t}\widetilde{T}& \openone - t\widetilde{T} 
    \end{array}
    \right]\left[
    \begin{array}{cc}
        X^{\frac{1}{2}} & 0 \\
         0 & X^{\frac{1}{2}}
    \end{array}
    \right] \ge 0\\
    &\iff \left[
    \begin{array}{cc}
        X^{\frac{1}{2}}\widetilde{Z}_kX^{\frac{1}{2}} - X - X^{\frac{1}{2}}\widetilde{T}X^{\frac{1}{2}} & -\sqrt{t}X^{\frac{1}{2}}\widetilde{T}X^{\frac{1}{2}} \\
         -\sqrt{t}X^{\frac{1}{2}}\widetilde{T}X^{\frac{1}{2}} & X -tX^{\frac{1}{2}}\widetilde{T}X^{\frac{1}{2}}
    \end{array}
    \right] \ge 0,\label{eq:ln_SDP_geo_mean_SDP_cl}
\end{align}
where the second implication follows from the fact that $\left[\begin{array}{cc}
        X^{\frac{1}{2}} & 0 \\
         0 & X^{\frac{1}{2}}
    \end{array}
    \right] > 0$. Since $X^{\frac{1}{2}}\left(X^{-\frac{1}{2}}YX^{-\frac{1}{2}}\right)^{2^{-k}}X^{\frac{1}{2}}$ has a semidefinite representation, we conclude that $f_{t_j}\!\left(\left(X^{-\frac{1}{2}}YX^{-\frac{1}{2}}\right)^{2^{-k}}\right)$ also has a semidefinite representation. Replacing $\widetilde{T}$ with $X^{\frac{1}{2}}\widetilde{T}X^{\frac{1}{2}}$ in~\eqref{eq:ln_SDP_geo_mean_SDP_cl}, we can write 
    \begin{align}
        X^{\frac{1}{2}}f_t(\widetilde{Z}_k)X^{\frac{1}{2}} \ge T &\iff f_t(\widetilde{Z}_k) \ge X^{-\frac{1}{2}}TX^{-\frac{1}{2}}\\
        &\iff \left[
    \begin{array}{cc}
        X^{\frac{1}{2}}\widetilde{Z}_kX^{\frac{1}{2}} - X - T & -\sqrt{t}T \\
         -\sqrt{t}T & X -tT
    \end{array}
    \right] \ge 0\\
    &\iff \left[\begin{array}{cc}
        Z_k - X - T & -\sqrt{t}T \\
         -\sqrt{t}T & X -tT
    \end{array}\right]\ge 0,
    \end{align}
    where the final implication follows from the fact that $Z_k = X^{\frac{1}{2}}\widetilde{Z}_kX^{\frac{1}{2}}$.
    Put together, $X^{\frac{1}{2}}f_{t}\!\left(\left(X^{-\frac{1}{2}}YX^{-\frac{1}{2}}\right)^{2^{-k}}\right)X^{\frac{1}{2}}\ge T$ if and only if
    \begin{equation}
        \left\{\left[\begin{array}{cc}
       X  & Z_{i+1} \\
        Z_{i+1} & Z_i
    \end{array}\right]\ge 0\right\}_{i=0}^k, \ Z_0 = Y,\ \left[\begin{array}{cc}
        Z_k - X - T & -\sqrt{t}T \\
         -\sqrt{t}T & X -tT
    \end{array}\right] \ge 0.
    \end{equation}
Consequently,
\begin{equation}
    \sum_{j=1}^m w_jX^{\frac{1}{2}}f_{t_j}\!\left(\left(X^{-\frac{1}{2}}YX^{-\frac{1}{2}}\right)^{2^{-k}}\right)X^{\frac{1}{2}} \ge \sum_{j=1}^m w_jT_j
\end{equation}
if and only if
\begin{equation}
        \left\{\left[\begin{array}{cc}
       X  & Z_{i+1} \\
        Z_{i+1} & Z_i
    \end{array}\right]\ge 0\right\}_{i=0}^k,\  Z_0 = Y,\ \left\{\left[\begin{array}{cc}
        Z_k - X - T_j & -\sqrt{t_j}T_j \\
         -\sqrt{t_j}T_j & X -t_jT_j
    \end{array}\right] \ge 0\right\}_{j=1}^m.
    \end{equation}
Comparing with~\eqref{eq:op_rel_ent_sdp_fn}, we conclude that the set $\left\{\Theta: \Theta \le -D_{\operatorname{op}}(X\Vert Y)\right\}$ can be approximated by the following set described by semidefinite constraints:
\begin{equation}
    \label{eq:K_cone}
    \left\{\begin{array}{c}
         \Theta:  \\
         \sum_{j=1}^m w_jT_j = 2^{-k}\Theta, \quad Z_0 = Y,\\
         \left[\begin{array}{cc}
       X  & Z_{i+1} \\
        Z_{i+1} & Z_i
    \end{array}\right]\ge 0 \qquad \forall i\in \{0,1,\ldots,k\},\\
    \left[\begin{array}{cc}
        Z_k - X - T_j & -\sqrt{t_j}T_j \\
         -\sqrt{t_j}T_j & X -t_jT_j
    \end{array}\right] \ge 0 \qquad \forall j\in \{1,2,\ldots,m\}
    \end{array}\right\}.
\end{equation}

\section{Proof of Equation~\eqref{eq:type_II_err_ub_bel_stas}}

\label{app:up-bnd-ecbs-proof}

In this section, we prove that the energy-constrained Belavkin--Staszewski relative entropy between two channels is an upper bound on the error exponent of asymmetric energy-constrained channel discrimination, as stated in~\eqref{eq:type_II_err_ub_bel_stas}.

\begin{proposition}
For an arbitrary energy-constrained channel discrimination protocol of the form described in Section~\ref{sec:q-ch-disc}, for channels $\mathcal{N}$ and $\mathcal{M}$, for a grounded Hamiltonian $H$, and for all $\varepsilon\in [0,1)$, $E>0$, and $n\in \mathbb{N}$, the following bound holds:
\begin{equation}
    \frac{-\ln \beta^\star_n(E,\varepsilon)}{n} \leq \frac{1}{1-\varepsilon}\left(\widehat{D}_{H,E}\!\left(\mathcal{N}_{A\to B}\middle\Vert\mathcal{M}_{A\to B}\right) + \frac{h_2(\varepsilon)}{n}\right),
    \label{eq:non-asymptotic-bnd-ecbs}
\end{equation}
where the energy-constrained Belavkin--Staszewski channel relative entropy $\widehat{D}_{H,E}\!\left(\mathcal{N}_{A\to B}\middle\Vert\mathcal{M}_{A\to B}\right)$ is defined in~\eqref{eq:def-ecbs} and 
\begin{equation}
h_2(\varepsilon)\coloneqq -\varepsilon\ln(\varepsilon) - (1-\varepsilon)\ln(1-\varepsilon)    
\end{equation}
is the binary entropy. 
\end{proposition}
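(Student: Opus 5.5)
We follow the standard meta-converse plus chain-rule template of~\cite{wilde2020amortized} (see also~\cite{FangFawzi2021}). We run the protocol under each hypothesis and apply data processing for the Belavkin--Staszewski relative entropy under the final measurement. Then we peel off the channel uses one at a time with the chain rule~\eqref{eq:chain-rule}, keeping track of the per-round energies, and finally use concavity of the energy-constrained divergence in $E$ to turn the per-round energies into the average constraint~\eqref{eq:energy-constraint}.

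\textbf{Step 1 (meta-converse).} Fix an adaptive protocol with $\alpha_n\le\varepsilon$. Let $\rho^{(n)}_0,\rho^{(n)}_1$ be the final states. Since $\widehat D$ obeys data processing, applying the measurement $\mathcal{Q}$ gives $\widehat D(\rho_0^{(n)}\|\rho_1^{(n)})\ge d(1-\alpha_n\|\beta_n)$. Here $d$ is the binary classical relative entropy, on which all quantum divergences agree. The standard estimate $d(1-\alpha\|\beta)\ge -(1-\alpha)\ln\beta - h_2(\alpha)$, together with monotonicity in $\alpha\le\varepsilon$ (valid for $\varepsilon<1$), yields
\[
-\ln\beta_n\le\frac{1}{1-\varepsilon}\left(\widehat D(\rho_0^{(n)}\|\rho_1^{(n)})+h_2(\varepsilon)\right).
\]

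\textbf{Step 2 (peeling).} Let $\rho_{R_jA_j}$ and $\sigma_{R_jA_j}$ be the states entering the $j$th channel use under the two hypotheses, with $\rho_{R_1A_1}=\sigma_{R_1A_1}=\tau$. The next states are obtained by applying $\mathcal{N}$ (resp.\ $\mathcal{M}$) and then the common channel $\mathcal{A}_j$. By data processing and the chain rule~\eqref{eq:chain-rule}, applied with the energy $E_j\coloneqq \Tr[H\tau^{(j)}_{\mathcal N}]$ of the $\mathcal N$-hypothesis input (the chain rule only constrains the first argument $\rho$),
\[
\widehat D(\rho_{R_{j+1}A_{j+1}}\|\sigma_{R_{j+1}A_{j+1}})\le \widehat D(\rho_{R_jA_j}\|\sigma_{R_jA_j})+\widehat D_{H,E_j}(\mathcal N\|\mathcal M).
\]
Iterating from $j=1$, where the divergence vanishes, to $j=n$ (the last step being followed by $\mathcal{Q}$) gives
\[
\widehat D(\rho_0^{(n)}\|\rho_1^{(n)})\le\sum_{j=1}^n \widehat D_{H,E_j}(\mathcal N\|\mathcal M).
\]

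\textbf{Step 3 (averaging energies), the main obstacle.} We need $\frac1n\sum_j \widehat D_{H,E_j}\le \widehat D_{H,\bar E}$ with $\bar E=\frac1n\sum_jE_j\le E$. Monotonicity in $E$ is immediate, so what remains is concavity of $E\mapsto \widehat D_{H,E}(\mathcal N\|\mathcal M)$. Lemma~\ref{lem:ECBS_op_rel_ent_form} supplies it: the objective $\Tr[\rho_R D_{\rm op}(J^{\mathcal N}\|J^{\mathcal M})]$ is linear in $\rho_R$, and the energy constraint is linear. Take near-optimal $\rho^{(j)}$ for each $E_j$. Their average $\bar\rho$ is a state with energy at most $\bar E$, and its objective value equals the average of the individual values. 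Hence
\[
\frac1n\sum_j\widehat D_{H,E_j}\le \widehat D_{H,\bar E}\le\widehat D_{H,E}.
\]
Two points need care. First, $\widehat D_{H,E_j}$ may be needed at $E_j=0$. This is harmless because the Hamiltonian is grounded, so the formula of Lemma~\ref{lem:ECBS_op_rel_ent_form} still makes sense (the supremum runs over states supported on the ground space), and the linearity argument is unaffected. Second, the supremum in Lemma~\ref{lem:ECBS_op_rel_ent_form} is an equality over all states, not merely full-rank ones; the lemma establishes this via continuity.

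\textbf{Conclusion.} Combining the three steps and optimizing over protocols gives~\eqref{eq:non-asymptotic-bnd-ecbs}.
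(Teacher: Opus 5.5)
Your proof is correct and is essentially the paper's argument: a meta-converse, then peeling off channel uses via data processing and the chain-rule inequality, then linearity of the objective in Lemma~\ref{lem:ECBS_op_rel_ent_form} to pass to the average energy. Your Step~1, applying data processing of $\widehat D$ directly to the binary outcome distributions, is equivalent to the paper's use of $D_h^\varepsilon\le\frac{1}{1-\varepsilon}(D+h_2(\varepsilon))$ together with $D\le\widehat D$.

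The only organizational difference is that the paper never optimizes round by round. It keeps the unoptimized term $\Tr[(\Tr_R\tau^{(j,\mathcal N)}_{RA})^T D_{\mathrm{op}}(J^{\mathcal N}\|J^{\mathcal M})]$ from \eqref{eq:BS_decomp_ch_div_st_div}, sums over $j$, and takes a single supremum at the averaged input state. So it needs no concavity in $E$ and no per-round energies.

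Your detour is harmless except for your remark on $E_j=0$, which is not quite right. Lemma~\ref{lem:ECBS_op_rel_ent_form} does not extend to $E=0$, because its lower bound relies on full-rank states of energy at most $E$. Likewise, the chain rule \eqref{eq:chain-rule} is stated only for $E>0$. In fact $\widehat D_{H,0}$ as defined in \eqref{eq:def-ecbs} can be strictly smaller than the formula. For example, take a qubit with $H=|1\rangle\!\langle 1|$, $\mathcal N=\mathrm{id}$, and $\mathcal M$ completely depolarizing. Then $\Tr_B[D_{\mathrm{op}}(J^{\mathcal N}\|J^{\mathcal M})]=2\ln 2\, I$, whereas $\widehat D_{H,0}(\mathcal N\|\mathcal M)=\widehat D(|0\rangle\!\langle 0|\,\|\,I/2)=\ln 2$.

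The fix is straightforward:
\begin{enumerate}
\item Read your per-round quantity as $F(E_j)\coloneqq\sup\{\Tr[\omega\,\Tr_B D_{\mathrm{op}}(J^{\mathcal N}\|J^{\mathcal M})]:\omega\ge 0,\ \Tr[\omega]=1,\ \Tr[H\omega]\le E_j\}$.
\item Bound each per-round increment by $F(E_j)$ directly via \eqref{eq:BS_decomp_ch_div_st_div}, rather than via the chain rule as stated.
\item Run your concavity argument for $F$.
\item Invoke Lemma~\ref{lem:ECBS_op_rel_ent_form} only once, at the final $E>0$.
\end{enumerate}
With this reading, your proof coincides with the paper's.
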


\begin{proof}
Let us denote the probe state for the $j^{\operatorname{th}}$ channel by $\tau^{(j,\mathcal{E})}_{RA}$, which can be mathematically expressed as follows, as is evident from Fig.~\ref{fig:scheme}:
\begin{equation}\label{eq:ad_ch_disc_probe_defn}
    \tau^{(j,\mathcal{E})}_{RA} \coloneqq \left(\mathcal{A}^{(j-1)}_{RB\to RA}\circ\mathcal{E}_{A\to B}\right)\left(\tau^{(j-1,\mathcal{E})}_{RA}\right) \qquad \forall j\in \{2,3,\ldots,n\},
\end{equation}
with $\tau^{(1,\mathcal{E})}_{RA} \coloneqq \tau_{RA}$ being the initial probe state. $\mathcal{E}_{A\to B}$ represents the unknown channel. As such, after $n$ uses of the channel, the final state is either $\rho^{(n)}_{\mathcal{N}} \coloneqq \mathcal{N}_{A\to B}\!\left(\tau^{(n,\mathcal{N})}_{RA}\right)$ or $\rho^{(n)}_{\mathcal{M}}  \coloneqq \mathcal{M}_{A\to B}\!\left(\tau^{(n,\mathcal{M})}_{RA}\right)$.

The final measurement, denoted by $Q$ in Fig.~\ref{fig:scheme}, attempts to distinguish between the states $\rho^{(n)}_{\mathcal{N}}$ and $\rho^{(n)}_{\mathcal{M}}$. As such, the task of distinguishing between channels $\mathcal{N}$ and $\mathcal{M}$, with some fixed adaptive channels $\left\{\mathcal{A}^{(1)}_{RB\to RA}, \mathcal{A}^{(2)}_{RB\to RA},\ldots, \mathcal{A}^{(n-1)}_{RB\to RA}\right\}$, is equivalent to the task of distinguishing between states $\rho^{(n)}_{\mathcal{N}}$ and $\rho^{(n)}_{\mathcal{M}}$. If the type-I error probability is required to be  $\leq \varepsilon$, the minimum type-II error probability in state discrimination can be expressed in terms of the hypothesis testing relative entropy~\cite{Wang2012}. In particular, let $\alpha_n$ and $\beta_n$ denote the type-I and type-II error probabilities, respectively; then $\alpha_n \le \varepsilon$ implies that
\begin{equation}
    -\ln \beta^\star_n(E,\varepsilon) \le D^{\varepsilon}_h\!\left(\rho^{(n)}_{\mathcal{N}}\middle\Vert\rho^{(n)}_{\mathcal{M}}\right),
\end{equation}
where $D^{\varepsilon}_h(\cdot\Vert\cdot)$ denotes the hypothesis-testing relative entropy. It is well known that the following inequality holds for any two states $\rho$ and $\sigma$ \cite[Eq.~(2)]{Wang2012}:
\begin{align}
    D^{\varepsilon}_h\!\left(\rho\middle\Vert\sigma\right) &\le \frac{1}{1-\varepsilon}\!\left(D\!\left(\rho\middle\Vert\sigma\right) + h_2(\varepsilon)\right) \qquad \forall \varepsilon \in [0,1),\\
    &\le \frac{1}{1-\varepsilon}\!\left(\widehat{D}\!\left(\rho\middle\Vert\sigma\right) + h_2(\varepsilon)\right) \qquad \forall \varepsilon \in [0,1),
\end{align}
where the last inequality follows from the fact that the Belavkin--Staszewski relative entropy is larger than or equal to the Umegaki relative entropy~\cite{HiaiPetz1993}. As such,
\begin{align}
    -\ln \beta^\star_n(E,\varepsilon) &\le \frac{1}{1-\varepsilon}\!\left(\widehat{D}\!\left(\rho^{(n)}_{\mathcal{N}}\middle\Vert\rho^{(n)}_{\mathcal{M}}\right) + h_2(\varepsilon)\right)\\
    \implies \frac{-\ln \beta^\star_n(E,\varepsilon)}{n} &\le \frac{1}{n(1-\varepsilon)}\!\left(\widehat{D}\!\left(\rho^{(n)}_{\mathcal{N}}\middle\Vert\rho^{(n)}_{\mathcal{M}}\right) + h_2(\varepsilon)\right)\label{eq:tII_err_exp_le_geo}.
\end{align}
Let us now simplify the Belavkin--Staszewski divergence on the right-hand side of~\eqref{eq:tII_err_exp_le_geo}. Recall the inequality in~\eqref{eq:BS_decomp_ch_div_st_div}. We can write
\begin{align}
    \widehat{D}\!\left(\rho^{(n)}_{\mathcal{N}}\middle\Vert\rho^{(n)}_{\mathcal{M}}\right) &= \widehat{D}\!\left(\mathcal{N}_{A\to B}\!\left(\tau^{(n,\mathcal{N})}_{RA}\right)\middle\Vert\mathcal{M}_{A\to B}\!\left(\tau^{(n,\mathcal{M})}_{RA}\right)\right)\label{eq:rho_n_to_tau_n}\\
    &\le \widehat{D}\!\left(\tau^{(n,\mathcal{N})}_{RA}\middle\Vert\tau^{(n,\mathcal{M})}_{RA}\right) + \operatorname{Tr}\!\left[\left(\tr_{R}\!\left[\tau^{(n,\mathcal{N})}_{RA}\right]\right)^T D_{\operatorname{op}}\!\left(\Gamma^{\mathcal{N}}_{AB}\middle\|\Gamma^{\mathcal{M}}_{AB}\right)\right],
\end{align}
where the transpose is with respect to the eigenbasis of the Hamiltonian. Now recall the definition of $\tau^{n,\mathcal{E}}_{RA}$ from~\eqref{eq:ad_ch_disc_probe_defn}. As such,
\begin{align}
    \widehat{D}\!\left(\tau^{(n,\mathcal{N})}_{RA}\middle\Vert\tau^{(n,\mathcal{M})}_{RA}\right) &= \widehat{D}\!\left(\mathcal{A}^{(n-1)}_{RB\to RA}\circ\mathcal{N}_{A\to B}\!\left(\tau^{(n-1,\mathcal{N})}_{RA}\right)\middle\Vert\mathcal{A}^{(n-1)}_{RB\to RA}\circ\mathcal{M}_{A\to B}\!\left(\tau^{(n-1,\mathcal{M})}_{RA}\right)\right)\\
    &\le \widehat{D}\!\left(\mathcal{N}_{A\to B}\!\left(\tau^{(n-1,\mathcal{N})}_{RA}\right)\middle\Vert\mathcal{M}_{A\to B}\!\left(\tau^{(n-1,\mathcal{M})}_{RA}\right)\right),\label{eq:tau_n_to_tau_n-1}
\end{align}
where the inequality follows from the data-processing inequality for the Belavkin--Staszewski relative entropy (see, e.g., \cite[Cor.~7.55]{khatri2020principles}). Note that the expression on the right-hand side of~\eqref{eq:tau_n_to_tau_n-1} has the same form as the expression in~\eqref{eq:rho_n_to_tau_n}. Therefore,  the following upper bound on $\widehat{D}\!\left(\rho^{(n)}_{\mathcal{N}}\middle\Vert\rho^{(n)}_{\mathcal{M}}\right)$ is a consequence of a recursive argument:
\begin{align}
    \widehat{D}\!\left(\rho^{(n)}_{\mathcal{N}}\middle\Vert\rho^{(n)}_{\mathcal{M}}\right) &\le \widehat{D}\!\left(\tau^{(1,\mathcal{N})}_{RA}\middle\Vert\tau^{(1,\mathcal{M})}_{RA}\right) + \sum_{j = 1}^n \operatorname{Tr}\!\left[\left(\tr_{R}\!\left[\tau^{(j,\mathcal{N})}_{RA}\right]\right)^T D_{\operatorname{op}}\!\left(\Gamma^{\mathcal{N}}_{AB}\middle \|\Gamma^{\mathcal{M}}_{AB}\right)\right]\\
    &= n \operatorname{Tr}\!\left[\left(\frac{1}{n}\sum_{j = 1}^n\left(\tr_{R}\!\left[\tau^{(j,\mathcal{N})}_{RA}\right]\right)^T\right) D_{\operatorname{op}}\!\left(J^{\mathcal{N}}_{AB}\middle \| J^{\mathcal{M}}_{AB}\right)\right].\label{eq:BS_fin_states_to_ore}
\end{align}
Note that $\frac{1}{n}\sum_{j = 1}^n\left(\tr_{R}\!\left[\tau^{(j,\mathcal{N})}_{RA}\right]\right)^T$ is a quantum state. Irrespective of what adaptive strategy is chosen, the average energy constraint ensures that 
\begin{equation}
    \operatorname{Tr}\!\left[H_A\!\left(\frac{1}{n}\sum_{j = 1}^n\left(\tr_{R}\!\left[\tau^{(j,\mathcal{N})}_{RA}\right]\right)^T\right)\right] = \frac{1}{n}\sum_{j=1}^n\operatorname{Tr}\!\left[H_A\operatorname{Tr}_R\!\left[\tau^{(j,\mathcal{N})}_{RA}\right]\right] \le E,
\end{equation}
where the equality follows from the linearity of the trace and the fact that the transpose is taken with respect to the eigenbasis of the Hamiltonian. The inequality in~\eqref{eq:BS_fin_states_to_ore} can then be bounded from above as follows:
\begin{align}
    \widehat{D}\!\left(\rho^{(n)}_{\mathcal{N}}\middle\Vert\rho^{(n)}_{\mathcal{M}}\right) &\le n\sup_{\substack{\omega\ge 0, \tr[\omega] = 1,\\ \tr\left[H_A\omega_A\right] \le E}}  \tr\!\left[\omega_A D_{\operatorname{op}}\!\left(J^{\mathcal{N}}_{AB}\middle \| J^{\mathcal{M}}_{AB}\right)\right]\\
    &= n\widehat{D}_{H,E}\!\left(\mathcal{N}_{A\to B}\middle\Vert\mathcal{M}_{A\to B}\right),
\end{align}
where the equality follows from Lemma~\ref{lem:ECBS_op_rel_ent_form}. Substituting the above inequality into~\eqref{eq:tII_err_exp_le_geo}, we arrive at the claim in~\eqref{eq:non-asymptotic-bnd-ecbs}.
\end{proof}

Taking the limit of~\eqref{eq:non-asymptotic-bnd-ecbs} as $n\to \infty$ and $\varepsilon \to 0$, we then conclude~\eqref{eq:type_II_err_ub_bel_stas}. That is,
\begin{align}
    \lim_{\varepsilon \to 0}\limsup_{n \to \infty} \frac{-\ln \beta^\star_n(E,\varepsilon)}{n} & \leq \lim_{\varepsilon \to 0}\limsup_{n \to \infty} \frac{1}{1-\varepsilon}\left(\widehat{D}_{H,E}\!\left(\mathcal{N}_{A\to B}\middle\Vert\mathcal{M}_{A\to B}\right) + \frac{h_2(\varepsilon)}{n}\right) \\
    & = \lim_{\varepsilon \to 0}\frac{1}{1-\varepsilon}\widehat{D}_{H,E}\!\left(\mathcal{N}_{A\to B}\middle\Vert\mathcal{M}_{A\to B}\right) \\
    & = \widehat{D}_{H,E}\!\left(\mathcal{N}_{A\to B}\middle\Vert\mathcal{M}_{A\to B}\right) .
\end{align}

\section{Bosonic dephasing and loss channels and their Choi matrices}

\label{sec:CHOI}

\subsection{Bosonic dephasing channel}

A bosonic dephasing channel (BDC) describes a process in which the state of the system is phase-shifted by a random phase $\phi\in [-\pi,\pi]$ drawn with respect to some probability density $p(\phi)$ on this interval. The action of a BDC on an arbitrary state $\rho$ can be described mathematically as follows:
\begin{equation}
    \mathcal{D}_p(\rho) = \int_{-\pi}^{\pi}p(\phi)e^{-i\hat{n}\phi}\rho e^{i\hat{n}\phi},
\end{equation}
where $\hat{n}$ is the photon number operator.

An important class of BDCs emerges when the underlying probability density is set to be the wrapped normal distribution~\cite{Jiang2010,Arqand2020} (see also~\cite{lami2023exact,huang2024exact}):
\begin{equation}
    p_{\gamma}(\phi) \coloneqq \frac{1}{\sqrt{2\pi\gamma}}\sum_{k=-\infty}^{\infty} e^{-\frac{1}{2\gamma}(\phi + 2\pi k)^2},
\end{equation}
where $\gamma>0$ is the variance of the probability density. The action of such a dephasing channel, say $\mathcal{D}_{\gamma}$, on an arbitrary state $\rho$ takes the following form (see \cite[Eqs.~(2), (3), (11)]{lami2023exact}:
\begin{equation}\label{eq:dephasing_on_rho}
    \mathcal{D}_{\gamma}(\rho) = \sum_{m,n=0}^{\infty}e^{-\frac{\gamma}{2}(m-n)^2}\langle m|\rho|n\rangle ~|m\rangle\!\langle n|,
\end{equation}
where $\left\{|i\rangle\right\}_{i=0}^{\infty}$ is the Fock basis.

From~\eqref{eq:Choi_op_defn} and~\eqref{eq:dephasing_on_rho}, it is straightforward to verify that the Choi operator of $\mathcal{D}_{\gamma}$ is equal to the following:
\begin{equation}
    J^{\mathcal{D}_{\gamma}}_{RB} = \sum_{m,n=0}^{\infty} |m\rangle\!\langle n|_R \otimes e^{-\frac{\gamma}{2}(m-n)^2}|m\rangle\!\langle n|_B.
\end{equation}

\subsection{Pure-loss channel}

The loss of photons in a quantum information processing protocol is mathematically modeled by the pure-loss channel, which has been considered extensively in quantum information theory (see, e.g.,~\cite{Giovannetti2004,Wolf2007,Nair2018}).

Let $\hat{a}$ denote the annihilation operator for the system $A$ of interest, and let $\hat{b}$ denote the annihilation operator for the system that is lost to the environment $E$. Consider the beam-splitter unitary operator with transmissivity $\eta\in [0,1]$:
\begin{equation}
    U_{\operatorname{BS}}(\eta) \coloneqq e^{\cos^{-1}(\sqrt{\eta})(a^{\dagger}b - b^{\dagger}a)}.
\end{equation}
The action of $U_{\operatorname{BS}}(\eta)$ on the creation operator $\hat{a}^{\dagger}$ is given as follows:
\begin{equation}\label{eq:BS_unitary_on_creation}
    U_{\operatorname{BS}}(\eta)\hat{a}^{\dagger} U_{\operatorname{BS}}(\eta)^{\dagger} = \sqrt{\eta}~\hat{a}^{\dagger} + \sqrt{1-\eta}~\hat{b}^{\dagger}.
\end{equation}
The action of a pure-loss channel, with transmissivity $\eta$, on an arbitrary state $\rho_A$ is as follows:
\begin{equation}\label{eq:loss_ch_defn-app}
    \mathcal{L}_{\eta}\!\left(\rho_A\right) \coloneqq \operatorname{Tr}_E\!\left[U_{\operatorname{BS}}(\eta)\!\left(\rho_A\otimes |0\rangle\!\langle 0|_E\right)U_{\operatorname{BS}}(\eta)^{\dagger}\right].
\end{equation}

To determine the Choi operator of the pure-loss channel, let us note that
\begin{equation}\label{eq:BS_unitary_on_vaccuum}
    U_{\operatorname{BS}}(\eta)|0\rangle_A|0\rangle_E = |0\rangle_A|0\rangle_E \qquad \forall \eta \in [0,1].
\end{equation}
Now consider the following state vector:
\begin{align}
U_{\operatorname{BS}}(\eta)\ket{n}_A|0\rangle_E &= U_{\operatorname{BS}}(\eta)\frac{1}{\sqrt {n !}} \hat a^{\dag n} \ket{0}_A\ket{0}_E\\
&= \frac{\left(U_{\operatorname{BS}}(\eta)a^{\dagger}U_{\operatorname{BS}}(\eta)^{\dagger}\right)^n}{\sqrt{n!}}U_{\operatorname{BS}}(\eta)|0\rangle_A|0\rangle_E\\
&= \frac{\left(U_{\operatorname{BS}}(\eta)a^{\dagger}U_{\operatorname{BS}}(\eta)^{\dagger}\right)^n}{\sqrt{n!}}|0\rangle_A|0\rangle_E\\
&= \frac{1}{\sqrt{n!}}  \sum_{k=0}^n \binom{n}{k}  \sqrt{ \eta ^{n-k} (1-\eta)^{k} }\hat a ^{\dag (n-k)} \hat b^{\dag k}\ket{0}_A\ket{0}_E  \\
 &=  \sum_{k=0}^n \binom{n}{k}  \sqrt{ \eta ^{n-k} (1-\eta)^{k} } \sqrt{ \frac{(n-k)!k!}{n!}} \ket{n-k}_A\ket{k}_E \\
 &=\sum_{k=0}^n   \sqrt{ \binom{n}{k} \eta ^{n-k} (1-\eta)^{k} }  \ket{n-k}_A\ket{k}_E,\label{eq:BS_unitary_on_Fock}
\end{align}
where the first equality follows from the creation operator representation of Fock states, the second equality follows from the unitarity of $U_{\operatorname{BS}}(\eta)$, the third equality follows from~\eqref{eq:BS_unitary_on_vaccuum}, and the fourth equality follows from~\eqref{eq:BS_unitary_on_creation}. Now consider the following equality:
\begin{multline}\label{eq:BS_unitary_Fock_op}
U_{\operatorname{BS}}(\eta)\!\left(\ket{m}\!\bra{n}_A \otimes \ket{0}\!\bra{0}_E\right)U_{\operatorname{BS}}(\eta)^{\dagger} \\=
\sum_{k_1=0}^{m}\sum_{k_2=0}^{n}
\sqrt{\binom{n_1}{k_1}\binom{n_2}{k_2}\eta^{m+n-k_1-k_2}(1-\eta)^{k_1+k_2}} \ket{m-k_1}\!\bra{n-k_2}_A
\otimes\ket{k_1}\!\bra{k_2}_E,
\end{multline}
which follows from~\eqref{eq:BS_unitary_on_Fock}. As such,
\begin{equation}
\mathcal{L}_{\eta}\!\left(|m\rangle\!\langle n|\right) = \sum_{k=0}^{\min(m,n)}
\sqrt{\binom{m}{k_1}\binom{n}{ k_2}\eta^{m+n-2k}(1-\eta)^{2k}} \ket{m-k}\!\bra{n-k},
\end{equation}
where we have used~\eqref{eq:loss_ch_defn-app} and~\eqref{eq:BS_unitary_Fock_op} to arrive at the above equality. The Choi operator of the pure-loss channel, with transmissivity $\eta$, can hence be written as follows:
\begin{equation}
    J^{\mathcal{L}_{\eta}}_{RB} = \sum_{m,n=0}^{\infty}\sum_{k=0}^{\min(m,n)}
\sqrt{\binom{m} { k_1}\binom{n} { k_2}}\eta^{\frac{1}{2}(m+n)-k}(1-\eta)^{k} |m\rangle\!\langle n|_R\otimes\ket{m-k}\!\bra{n-k}_B.
\end{equation}

\subsection{Loss-dephasing channel}

If a probe state experiences both loss and dephasing, the corresponding channel is called the loss-dephasing channel, and its Choi matrix is then given by
\begin{equation}
    J^{\eta,\gamma}_{RB} = \sum_{m,n=0}^{\infty}\sum_{k=0}^{\min(m,n)}
\sqrt{\binom{m} { k_1}\binom{n} { k_2}}\eta^{\frac{1}{2}(m+n)-k}(1-\eta)^{k} e^{-\frac{\gamma}{2}(m-n)^2} |m\rangle\!\langle n|_R\otimes\ket{m-k}\!\bra{n-k}_B.
\end{equation}

\section{Channel divergences and their SDP representations}

\label{app:GRD-SDP}

In this appendix, we present the semidefinite programs that we use to compute the channel divergences relevant to this work. 

In general, it is not feasible to compute these channel divergences between bosonic channels since they do not have a finite-dimensional Choi operator. To facilitate the computation of these quantities, we restrict the dimension of the probe state in the channel discrimination task.

\subsection{Geometric R\'enyi divergence}

Recall from~\eqref{eq:EC_geo_ch_final} that the $\alpha$-geometric R\'enyi relative entropy of channels can be expressed as an optimization over all states with energy less than or equal to the energy budget $E$. Since we fix the dimension of the probe state to be a finite integer, the optimization in~\eqref{eq:EC_geo_ch_final} is a semidefinite program. 
The state divergence for two states is defined as~\cite{matsumoto2015new}
\begin{align}
\widehat{D}_{\alpha}(\rho\|\sigma) \coloneqq  \frac{1}{\alpha-1} \ln 
 \tr\!\left[ \sigma^{\frac{1}{2}}\left(\sigma^{-\frac{1}{2}}\rho  \sigma^{-\frac{1}{2}} \right)^\alpha      \sigma^{\frac{1}{2}} \right]
 \qquad 
 \alpha \in (1,2]
\end{align}
The quantity $\widehat D(\rho\|\sigma)$ is known to be the maximal divergence among all quantum
R\'enyi divergences. In the limit that $\alpha\rightarrow 1$ converges to the Belavkin-Staszewski relative entropy. The geometric R\'enyi divergence for two channels $\curlN$ and $\curlM$ is defined as
\begin{align}
\widehat D(\curlN\|\curlM) \coloneqq  \max_{\phi_{RA}}
\widehat D(\curlN_{A\rightarrow B}(\phi_{RA}) \|\curlM_{A\rightarrow B}(\phi_{RA}))
\end{align}

It has a closed form:
\begin{align} 
\label{eq:close_form}
    \widehat D_\alpha( \curlN_{A\rightarrow B}  \| \curlM_{A\rightarrow B}) 
      &= \frac{1}{\alpha -1}  \ln 
    \left\| \tr_B \!\left[  J_\curlM^{\frac{1}{2}}  \left(  J_\curlM^{-\frac{1}{2}}    J_\curlN      J_\curlM^{-\frac{1}{2}} \right )^\alpha J_\curlM^{\frac{1}{2}} \right]  \right\|_\infty \\
    &\equiv \frac{1}{\alpha - 1} \ln  \left\|  \tr_B \!\left[G_{1-\alpha}(J_{RB}^\curlN,J_{RB}^M)\right] \right\|_\infty\label{eq:ch_geo_entropy_geo_mean_op_norm_form}
\end{align}

Lemma~3 of~\cite{FangFawzi2021} shows that~\eqref{eq:close_form} for $\alpha(\ell) = 1+ 2^{-\ell}, ~\ell \in \mathbb{N}$, (without an energy constraint on $\rho_A$) can be written as the following semidefinite program,
\begin{align} \label{eq:lemma3fang}
&2^\ell ~ \ln  \min y \quad \text{s.t} \quad \left[L_{RB},~ \{N_i\}_{i = 0}^\ell, ~ J_\curlM,y\right]_H,\notag\\
\left[
\begin{array}{cc}
L_{RB} & J^\curlN \\
 J^\curlN & N_\ell \\
\end{array}
\right]_{ P}, \qquad
&\left\{
\left[
\begin{array}{cc}
J_\curlN & N_i \\
 N_i & N_{i-1} \\
\end{array}
\right]_P
\right\}_{i = 1}^\ell, \qquad
\left[N_0-J_\curlM\right]_E, \qquad
\left[y \openone_A -\text{Tr}_B\!\left[ L_{RB}\right]\right]_P.
\end{align}
The notation here $\left[M\right]_P,\left[M\right]_E, \left[M\right]_H$ denote the positive semidefinite condition $M\geq 0$, the equality condition $M=0$ and Hermiticity, respectively. The proofs are given in Ref.~\cite{fawzi2019semidefinite}, and we include them in \mmw{REF} App.~\ref{app:w_geo_sdp} for completeness.

\subsection{Proof of Equation~\eqref{eq:lemma3fang}}

\label{app:w_geo_sdp}

Let $X$ and $Y$ be two positive semidefinite operators such that $\operatorname{supp}(X)\subseteq \operatorname{supp}(Y)$. We know from the Schur complement lemma that for every Hermitian operator $N$,
\begin{equation}
	\begin{pmatrix}
	X & N\\
	N & Y
	\end{pmatrix} \ge 0 \iff Y \ge NX^{-1}N, \quad (\openone-XX^{-1})N = 0,
\end{equation}
where the inverse is taken on the support of $X$. Now consider the following inequalities:
\begin{align}
	Y &\ge NX^{-1}N\\
	\implies X^{-\frac{1}{2}}YX^{-\frac{1}{2}} &\ge X^{-\frac{1}{2}}NX^{-1}NX^{-\frac{1}{2}}\\
	\implies X^{-\frac{1}{2}}YX^{-\frac{1}{2}} &\ge \left(X^{-\frac{1}{2}}NX^{-\frac{1}{2}}\right)^2\\
	\implies \left(X^{-\frac{1}{2}}YX^{-\frac{1}{2}}\right)^{\frac{1}{2}} &\ge X^{-\frac{1}{2}}NX^{-\frac{1}{2}}\\
	\implies X^{\frac{1}{2}}\left(X^{-\frac{1}{2}}YX^{-\frac{1}{2}}\right)^{\frac{1}{2}}X^{\frac{1}{2}} &\ge N,
\end{align}
where the second and the final inequalities follow from the fact that $A\ge 0 \implies BAB \ge 0$ for any positive semidefinite operator $B$, and the penultimate inequality follows from the fact that square root is operator monotone. As such, the positive semidefiniteness of the matrix $\begin{pmatrix}
	X & N\\
	N & Y
	\end{pmatrix}$ ensures that $N \le G_{\frac{1}{2}}(X,Y)$.
	
	Also note that $N = G_{\frac{1}{2}}(X,Y)$ satisfies $\begin{pmatrix}
	X & N\\
	N & Y
	\end{pmatrix} \ge 0$. This can be verified as follows:
	\begin{align}
		G_{\frac{1}{2}}(X,Y) X^{-1} G_{\frac{1}{2}}(X,Y) &= X^{\frac{1}{2}}\!\left(X^{-\frac{1}{2}}YX^{-\frac{1}{2}}\right)^{\frac{1}{2}}\!\left(X^{-\frac{1}{2}}YX^{-\frac{1}{2}}\right)^{\frac{1}{2}}X^{\frac{1}{2}}\\
		&= X^{\frac{1}{2}}X^{-\frac{1}{2}}YX^{-\frac{1}{2}}X^{\frac{1}{2}}\\
		&\le Y,\label{eq:geo_mean_le_Y}
	\end{align}
	where the inequality follows from the fact that $\operatorname{supp}(X) \subseteq \operatorname{supp}(Y)$. Also, $(I-XX^{-1})G_{\frac{1}{2}}(X,Y) = 0$, which, along with~\eqref{eq:geo_mean_le_Y}, is sufficient to ensure that  $\begin{pmatrix}
	X & G_{\frac{1}{2}}(X,Y)\\
	G_{\frac{1}{2}}(X,Y) & Y
	\end{pmatrix} \ge 0$ from the Schur complement lemma.
	
Recalling the identity in~\eqref{eq:geo_wt_multiplicative}, we can write
\begin{equation}
	G_{2^{-\ell}}(X,Y) = G_{\frac{1}{2}}\!\left(X,G_{2^{1-\ell}}(X,Y)\right).
\end{equation}
Now consider the following pair of inequalities:
\begin{equation}
	\begin{pmatrix}
	X & N_1\\
	N_1 & Y
	\end{pmatrix} \ge 0, \qquad \begin{pmatrix}
	X & N_2\\
	N_2 & N_1
	\end{pmatrix} \ge 0.
\end{equation}
The first inequality is satisfied if and only if $N_1 \le G_{\frac{1}{2}}(X,Y)$, and the second inequality is satisfied if and only if 
\begin{align}
	N_2 &\le G_{\frac{1}{2}}\!\left(X,N_1\right)\\
	&\le G_{\frac{1}{2}}\!\left(X,G_{\frac{1}{2}}(X,Y)\right)\\
	&= G_{\frac{1}{4}}(X,Y),
\end{align}
where the second inequality follows from~\eqref{eq:wt_geo_mean_monotonic} and the equality follows from~\eqref{eq:geo_wt_multiplicative}. From a recursive argument,
\begin{equation}\label{eq:geo_2^l_constraints}
	\begin{pmatrix}
	X & N_1\\
	N_1 & Y
	\end{pmatrix} \ge 0, \qquad \begin{pmatrix}
	X & N_{i+1}\\
	N_{i+1} & N_i
	\end{pmatrix} \ge 0 \qquad \forall i\in \{1,2,\ldots,\ell-1\}
\end{equation}
hold if and only if
\begin{equation}
	N_i \le G_{2^{-i}}(X,Y) \qquad \forall i\in \{1,2,\ldots,\ell\}.
\end{equation}

Now consider the following matrix inequality 
\begin{equation}
	\begin{pmatrix}
	L & X\\
	X & Y
	\end{pmatrix} \ge 0 
\end{equation}
which is satisfied by a Hermitian operator $L$ if and only if 
\begin{equation}
	L \ge XY^{-1}X = G_{-1}(X,Y).
\end{equation}
Adding this condition to the list of conditions given in~\eqref{eq:geo_2^l_constraints}, we can state that 
\begin{equation}\label{eq:geo_mean_2^l_constraints}
    \begin{pmatrix}
	L & X\\
	X & N_{\ell}
	\end{pmatrix} \ge 0, \qquad \begin{pmatrix}
	X & N_1\\
	N_1 & Y
	\end{pmatrix} \ge 0, \qquad \begin{pmatrix}
	X & N_{i+1}\\
	N_{i+1} & N_i
	\end{pmatrix} \ge 0 \qquad \forall i\in \{1,2,\ldots,\ell-1\}
\end{equation}
are satisfied if and only if
\begin{equation}
    N_i \le G_{2^{-i}}(X,Y) \qquad \forall i\in \{1,2,\ldots,\ell\}
\end{equation}
and 
\begin{align}
    L &\ge G_{-1}\!\left(X,N_{\ell}\right)\\
    &\ge G_{-1}\!\left(X,G_{2^{-\ell}}(X,Y)\right)\\
    &= G_{-2^{-\ell}}(X,Y),\label{eq:L_ge_geo_mean}
\end{align}
where the second inequality follows from the anti-monotonicity of $G_t(X,Y)$ when $t\in [-1,0)$.

We are interested in an SDP to compute $\left\Vert \operatorname{Tr}_B\!\left[G_{-2^{-\ell}}\!\left(J^{\mathcal{N}}_{RB},J^{\mathcal{M}}_{RB}\right)\right]\right\Vert_{\infty}$ (refer to~\eqref{eq:ch_geo_entropy_geo_mean_op_norm_form} for details). For a positive semidefinite operator $Z$ and a real number $y$, $y\openone \ge Z$ if and only if $y\ge \Vert Z\Vert_{\infty}$. Therefore, the following set of conditions:
\begin{equation}\label{eq:wt_geo_mean_op_norm_const}
    \begin{pmatrix}
	L_{RB} & J^{\mathcal{N}}_{RB}\\
	J^{\mathcal{N}}_{RB} & N_{\ell}
	\end{pmatrix} \ge 0, \qquad \begin{pmatrix}
	J^{\mathcal{N}}_{RB} & N_1\\
	N_1 & J^{\mathcal{M}}_{RB}
	\end{pmatrix} \ge 0, \qquad \left\{\begin{pmatrix}
	X & N_{i+1}\\
	N_{i+1} & N_i
	\end{pmatrix} \ge 0\right\}_{i=1}^{\ell}, \qquad y\openone_R \ge \operatorname{Tr}_B\!\left[L_{RB}\right]
\end{equation}
are satisfied if and only if
\begin{align}
    y &\ge \left\Vert\operatorname{Tr}_B\!\left[L_{RB}\right]\right\Vert_{\infty}\\
    &\ge  \left\Vert\operatorname{Tr}_B\!\left[G_{-2^{-\ell}}\!\left(J^{\mathcal{N}}_{RB},J^{\mathcal{M}}_{RB}\right)\right]\right\Vert_{\infty},
\end{align}
where the last inequality follows from~\eqref{eq:L_ge_geo_mean}. Therefore, minimizing over every real number $y$ and Hermitian operators $\left(N_1,N_2,\ldots,N_{\ell}\right)$ subject to the constraints in~\eqref{eq:wt_geo_mean_op_norm_const} is an SDP, the optimal value of which is equal to $\left\Vert \operatorname{Tr}_B\!\left[G_{-2^{-\ell}}\!\left(J^{\mathcal{N}}_{RB},J^{\mathcal{M}}_{RB}\right)\right]\right\Vert_{\infty}$. Using the aforementioned SDP, along with the closed form expression of the geometric R\'enyi channel divergence from~\eqref{eq:ch_geo_entropy_geo_mean_op_norm_form}, yields the program in~\eqref{eq:lemma3fang}.

\subsection{Primal SDP}
\label{sec:grd_sdp_primal}

Since $\widehat D_{\alpha,H,E}$ upper bounds all other channel divergences, we would like to calculate the smallest value of this quantity for the smallest possible $\alpha$ to obtain the tightest upper bound.  We define $C$
\begin{align}\label{eq:ch_ec_grd_w_C}
\widehat D_{\alpha,H,E}\!\left(\curlN_{A\to B}\middle\|\curlM_{A\to B}\right) &= \frac{1}{\alpha -1} \ln  \sup_{\substack{\rho_R\ge 0, \operatorname{Tr}\left[\rho_R\right] = 1,\\ \operatorname{Tr}\left[H\rho\right] \le E}}
\tr[\rho_R \underbrace{\tr_B[G_{1-\alpha}(J^{\mathcal{N}}_{RB},J^{\mathcal{M}}_{RB})]}_{\equiv C}] 
\end{align}

To simplify the notation, we omit the subscripts $RB$ and $R$ below for clarity.

For the purpose of binary hypothesis testing where the channels $\curlN$ and $\curlM$ are known, we can compute 
$G_{1-\alpha}(J^\curlN_{RB},J^\curlM_{RB}$) directly, and we need only to optimize over the input state $\rho$. Therefore the energy-constrained geometric R\'enyi channel divergence can be efficiently computed as follows:
\begin{align}\label{eq:ch_grd_w_C_opt_form}
 \widehat{D}_{\alpha,H,E}\!\left(\mathcal{N}_{A\to B}\middle\Vert\mathcal{M}_{A\to B}\right) = \frac{1}{\alpha -1} \ln   
 \sup_{\rho \geq 0}
 \left \{\tr[\rho C]: \tr[\rho]=1,  ~\quad \tr[H\rho]\leq E\right\},
\end{align}
where $C$ is defined in~\eqref{eq:ch_ec_grd_w_C} and $H= \sum_{n=0}^\text{cutoff} n \ket{n}\!\bra{n}$ is the Hamiltonian.
Because we work with a truncated Hilbert space, both $C$ and $H$ are finite-dimensional matrices, and $\rho_R$ is optimized over finite-dimensional density operators.

For other scenarios where one of the channels may need to be optimized (e.g. in quantum reading or composite hypothesis testing, where $G_{1-\alpha} $ is part of the optimization), the program is more complex, and we include it here for completeness.

Introducing Lagrange multipliers, the optimization in~\eqref{eq:ch_grd_w_C_opt_form} can be written as follows:
\begin{align}
 \widehat{D}_{\alpha,H,E}\!\left(\mathcal{N}_{A\to B}\middle\Vert\mathcal{M}_{A\to B}\right) &= \frac{1}{\alpha -1} \ln   
 \left \{ \sup_{\rho\geq 0}\tr[\rho C] +
\inf_{\lambda \in \mathbb{R}}\!\left(\lambda\!\left(1-\tr[\rho]\right)\right)
+\inf_{\mu \geq 0}\left(\mu\!\left(E-\tr[H \rho]\right)\right)\right\}\\ 
&=
\frac{1}{\alpha -1} \ln    \sup_{\rho\geq 0}
\inf_{\substack{\lambda \in \mathbb{R},\\\mu \geq 0}}
 \left \{ \lambda+\mu E+
\tr\!\left[\rho (C-\lambda \openone - \mu H)\right]\right\}\\
&\le \frac{1}{\alpha -1} \ln    \inf_{\substack{\lambda \in \mathbb{R},\\\mu \geq 0}}\sup_{\rho\geq 0}
\left \{ \lambda+\mu E+
\tr\!\left[\rho (C-\lambda \openone - \mu H)\right]\right\}\\
&= \frac{1}{\alpha -1} \ln   
\inf_{\substack{\lambda \in \mathbb{R},\\\mu \geq 0}}
 \left \{
 \lambda+\mu E :~
 C\leq \lambda \openone + \mu H
\right\}, 
\end{align}
where the inequality follows from the max-min inequality. Setting $\alpha = 1+2^{-\ell}$ and using the semidefinite representation of $G_{-2^{-\ell}}\!\left(J^{\mathcal{N}}_{RB},J^{\mathcal{M}}_{RB}\right)$ from~\eqref{eq:geo_mean_2^l_constraints}, we arrive at the following semidefinite program for the energy-constrained geometric R\'enyi divergence of channels:
\begin{align}
\label{eq:final-SDP-GRD-composite}
\widehat{D}_{1+2^{-\ell}, H,E}\!\left(\mathcal{N}_{A\to B}\Vert\mathcal{M}_{A\to B}\right) =  2^\ell \ln   
\inf_{\substack{\lambda \in \mathbb{R}, ~\mu \geq 0 \\ L_{RB},N_0,\ldots, N_\ell\in \operatorname{Herm}}}
 \left \{
 \lambda +\mu E 
\right\} 
\end{align}
subject to the constraints
\begin{align}
            \left\{ C = \tr_B [L_{RB}] \leq \lambda \openone + \mu H, \quad
\left[
\begin{array}{cc}
L_{RB} & J^\curlN \\
 J^\curlN & N_\ell \\
\end{array}
\right] 
\geq 0, 
 \quad
\left\{ \left[
\begin{array}{cc}
 J^\curlN & N_i \\
 N_i & N_{i-1} \\
\end{array}
\right] \geq 0\right\}_{i=1}^\ell,
\qquad
 N_0 = J^\curlM \right\} .
\end{align}

\subsection{Dual SDP}

In this section, we derive the dual of the semidefinite program given in~\eqref{eq:final-SDP-GRD-composite}.

Recalling the semidefinite representation of the operator geometric mean from~\eqref{eq:geo_mean_2^l_constraints}, we can write the energy-constrained geometric R\'enyi divergence of channels as the following optimization:
\begin{equation}\label{eq:GRD_ch_sup_inf}
	\widehat{D}_{1+2^{-\ell},H,E}\!\left(\mathcal{N}_{A\to B}\middle\Vert\mathcal{M}_{A\to B}\right) = 2^{\ell}\ln\sup_{\substack{\rho_R\ge 0, \operatorname{Tr}[\rho] = 1,\\ \operatorname{Tr}[H\rho] \le E}}\inf_{L_{RB},N_0,\ldots, N_\ell\in \operatorname{Herm}}\left\{
	\begin{array}{c}
		\operatorname{Tr}\!\left[\left(\rho_R\otimes \openone_B\right)L_{RB}\right]:\\
		
\left[
\begin{array}{cc}
L_{RB} & J^\curlN \\
 J^\curlN & N_\ell \\
\end{array}
\right] 
\geq 0, \quad N_0 = J^{\mathcal{M}},\\ 
\left[
\begin{array}{cc}
 J^\curlN & N_i \\
 N_i & N_{i-1} \\
\end{array}
\right] \geq 0 \qquad \forall i = \{1,2,\ldots,\ell\}
	\end{array}\right\}.
\end{equation}
Let us focus on the inner infimum and find the dual of this optimization. Introducing Lagrange multipliers, we can write the inner infimum of~\eqref{eq:GRD_ch_sup_inf} as follows:
\begin{multline}\label{eq:GRD_ch_inf_sup}
	\inf_{L_{RB},N_0,\ldots, N_\ell\in \operatorname{Herm}}\sup\!\left\{\operatorname{Tr}\!\left[\left(\rho_R\otimes \openone_B\right)L_{RB}\right] - \operatorname{Tr}\!\left[\left[\begin{array}{cc}
	Y_{\ell} & W^{\dagger}_{\ell}\\
	W_{\ell} & Z_{\ell}
	\end{array}\right]\!\left[\begin{array}{cc}
	L & J^{\mathcal{N}}\\
	J^{\mathcal{N}} & N_{\ell}
	\end{array}\right]\right]\right.\\ \left.- \sum_{i=1}^{\ell}\operatorname{Tr}\!\left[\left[\begin{array}{cc}
	Y_{i-1} & W^{\dagger}_{i-1}\\
	W_{i-1} & Z_{i-1}
	\end{array}\right]\!\left[\begin{array}{cc}
	J^{\mathcal{N}} & N_i\\
	N_i & N_{i-1}
	\end{array}\right]\right] - \operatorname{Tr}\!\left[X\!\left(N_0 - J^{\mathcal{M}}\right)\right]: X \in \operatorname{Herm},\left[\begin{array}{cc}
	Y_{i} & W^{\dagger}_{i}\\
	W_{i} & Z_{i}
	\end{array}\right] \ge 0 \quad \forall i\in \{0,1,\ldots,\ell\} \right\}.
\end{multline}
Upon expanding and rearranging the expression in~\eqref{eq:GRD_ch_inf_sup}, we arrive at the following:
\begin{align}
&\inf_{L_{RB},N_0,\ldots, N_\ell\in \operatorname{Herm}}\sup\!\left\{\operatorname{Tr}\!\left[\left(\rho_R\otimes I_B - Y_{\ell}\right)L_{RB}\right] - \operatorname{Tr}\!\left[J^{\mathcal{N}}\!\left(W_{\ell} + W^{\dagger}_{\ell}+\sum_{i=0}^{\ell - 1}Y_i\right)\right] + \operatorname{Tr}\!\left[XJ^{\mathcal{M}}\right] - \operatorname{Tr}\!\left[N_0\!\left(X+Z_0\right)\right]\right.\notag\\
&\qquad\qquad\qquad\left. + \sum_{j=1}^{\ell}\operatorname{Tr}\!\left[N_i\!\left(Z_i + W_{i-1} + W^{\dagger}_{i-1}\right)\right]: X\in \operatorname{Herm},\left[\begin{array}{cc}
    Y_i & W_i^{\dagger} \\
    W_i & Z_i
\end{array}\right]\ge 0 \quad \forall i\in \{0,1,\ldots, \ell\}\right\}\\
&\ge \sup_{\substack{X\in \operatorname{Herm},\\ \left\{\left[\begin{array}{cc}
    Y_i & W_i^{\dagger} \\
    W_i & Z_i
\end{array}\right]\ge 0\right\}_{i=0}^{\ell}}}\inf_{L_{RB},N_0,\ldots, N_\ell\in \operatorname{Herm}}\!\left\{\operatorname{Tr}\!\left[\left(\rho_R\otimes I_B - Y_{\ell}\right)L_{RB}\right] - \operatorname{Tr}\!\left[J^{\mathcal{N}}\!\left(W_{\ell} + W^{\dagger}_{\ell}+\sum_{i=0}^{\ell - 1}Y_i\right)\right]\right.\notag\\
&\qquad\qquad\qquad\qquad\qquad \left. + \operatorname{Tr}\!\left[XJ^{\mathcal{M}}\right]+ \sum_{j=1}^{\ell}\operatorname{Tr}\!\left[N_i\!\left(Z_i + W_{i-1} + W^{\dagger}_{i-1}\right)\right] - \operatorname{Tr}\!\left[N_0\!\left(X+Z_0\right)\right]\right\},\label{eq:GRD_ch_sup_inf_inner}
\end{align}
where we have used the max-min inequality to swap the order of the infimum and the supremum. The expression in~\eqref{eq:GRD_ch_sup_inf_inner} is finite if and only if $\rho_R\otimes I_B = Y_{\ell}$, $Z_i + W_{i-1} + W^{\dagger}_{i-1} = 0$, and $X = -Z_0$. Therefore, the optimization in~\eqref{eq:GRD_ch_sup_inf_inner} is equivalent to the following semidefinite program:
\begin{equation}
	\sup_{\left\{\left[\begin{array}{cc}
    Y_i & W_i^{\dagger} \\
    W_i & Z_i
\end{array}\right]\ge 0\right\}_{i=0}^{\ell}}\left\{ 
	\begin{array}{c}
	-\operatorname{Tr}\!\left[Z_0J^{\mathcal{M}}\right] - \operatorname{Tr}\!\left[J^{\mathcal{N}}\!\left(W_{\ell} + W^{\dagger}_{\ell}+\sum_{i=0}^{\ell - 1}Y_i\right)\right]:\\
	Y_{\ell} = \rho_R\otimes I_B,\\
	Z_i + W_{i-1} + W^{\dagger}_{i-1} = 0 \quad \forall i\in \{1,2,\ldots,\ell\}
	\end{array}\right\}.
\end{equation}
Since the above optimization yields a value less than or equal to the value of the inner infimum in~\eqref{eq:GRD_ch_sup_inf}, we can write the following semidefinite programming lower bound on the geometric R\'enyi divergence of channels:
\begin{equation}
	\widehat{D}_{1+2^{-\ell},H,E}\!\left(\mathcal{N}_{A\to B}\middle\Vert\mathcal{M}_{A\to B}\right) \ge 2^{\ell}\ln\sup_{\substack{\rho_R\ge 0,\\ \left\{\left[\begin{array}{cc}
    Y_i & W_i^{\dagger} \\
    W_i & Z_i
\end{array}\right]\ge 0\right\}_{i=0}^{\ell}}}\left\{ 
	\begin{array}{c}
	-\operatorname{Tr}\!\left[Z_0J^{\mathcal{M}}\right] - \operatorname{Tr}\!\left[J^{\mathcal{N}}\!\left(W_{\ell} + W^{\dagger}_{\ell}+\sum_{i=0}^{\ell - 1}Y_i\right)\right]:\\
	Y_{\ell} = \rho_R\otimes I_B, \operatorname{Tr}[\rho] = 1, \operatorname{Tr}[H\rho] \le E,\\
	Z_i + W_{i-1} + W^{\dagger}_{i-1} = 0 \quad \forall i\in \{1,2,\ldots,\ell\}
	\end{array}\right\}.
\end{equation}
We can replace $W_i$ with $-W_i$ for every $i\in \{0,1,\ldots,\ell\}$ in the above expression to arrive at the following expression:
\begin{equation}\label{eq:GRD_ch_dual}
	\widehat{D}_{1+2^{-\ell},H,E}\!\left(\mathcal{N}_{A\to B}\middle\Vert\mathcal{M}_{A\to B}\right) \ge 2^{\ell}\ln\sup_{\substack{\rho_R\ge 0,\\ \left\{\left[\begin{array}{cc}
    Y_i & W_i^{\dagger} \\
    W_i & Z_i
\end{array}\right]\ge 0\right\}_{i=0}^{\ell}}}\left\{ 
	\begin{array}{c}
	\operatorname{Tr}\!\left[J^{\mathcal{N}}\!\left(W_{\ell} + W^{\dagger}_{\ell}-\sum_{i=0}^{\ell - 1}Y_i\right)\right]-\operatorname{Tr}\!\left[Z_0J^{\mathcal{M}}\right] :\\
	Y_{\ell} = \rho_R\otimes I_B, \operatorname{Tr}[\rho] = 1, \operatorname{Tr}[H\rho] \le E,\\
	Z_i = W_{i-1} + W^{\dagger}_{i-1} \quad \forall i\in \{1,2,\ldots,\ell\}
	\end{array}\right\}.
\end{equation}


\section{Hilbert space truncation for the loss-dephasing channel}

\label{app:trunc-plot}

Fig.~\ref{fig:dim_rho_r_app} indicates, for loss-dephasing bosonic channels, that the truncation error decreases as the truncation parameter increases, similar to what was plotted in Fig.~\ref{fig:dim_rho_r} in the main text.

\begin{figure}[h!]
  \centering
  \includegraphics[trim= {0 0 0 0cm},clip, width=0.5\linewidth]{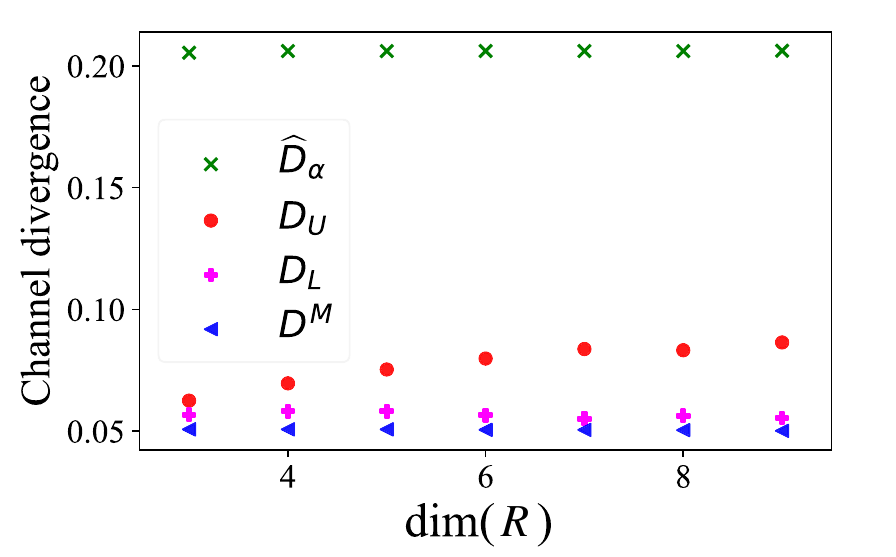}
  \caption{\label{fig:dim_rho_r_app} Channel divergences for two loss-dephasing channels as a function of the Hilbert space truncation, for $E=1$; the transmissivity and dephasing parameters are as follows: $\eta_1 = 0.95, \eta_2 = 0.85$, $\gamma_1 =\gamma_2= 0.01$}
\end{figure}

\section{Truncation error bound}

\label{app:trunc-err-bnd}

In this appendix, we prove that the error induced by truncation, when calculating the energy-constrained Belavkin--Staszewski divergence of bosonic dephasing channels, is no larger than $\frac{2E}{N+1}D(p\|q)$, where $E$ is the energy constraint, $N$ is the truncation parameter, and $D(p\|q)$ is the relative entropy of the probability densities underlying the bosonic dephasing channels. It should be noted that, if the input state is truncated in the Fock basis, then the output state is also, because bosonic dephasing channels do not increase photon number (they only randomize phases). 

\begin{proposition}
Let $\mathcal{N}$ and $\mathcal{M}$ be bosonic dephasing channels
with underlying probability densities $p$ and $q$, respectively.
Let $\hat{n}$ be the photon number operator, $E\geq0$ an energy constraint, and $N\in\mathbb{N}$
a truncation parameter. Then
\begin{equation}
\widehat{D}_{\hat{n},E,N}(\mathcal{N}\|\mathcal{M}) \leq  \widehat{D}_{\hat{n},E}(\mathcal{N}\|\mathcal{M})\leq\widehat{D}_{\hat{n},E,N}(\mathcal{N}\|\mathcal{M})+\frac{2E}{N+1}D(p\|q),\label{eq:trunc-bnd}
\end{equation}
where $\widehat{D}_{\hat{n},E,N}$ denotes the truncated, energy-constrained Belavkin--Staszewski divergence:
\begin{equation}
\widehat{D}_{\hat{n},E,N}(\mathcal{N}\|\mathcal{M})\coloneqq
\sup_{\substack{\psi_{RA}:\Tr[H_{A}\psi_{A}]\leq E,\\
\psi_{A}\in\mathcal{D}_{N}
}
}\widehat{D}((\operatorname{id}_R\otimes\mathcal{N})(\psi_{RA})\|(\operatorname{id}_R\otimes\mathcal{M})(\psi_{RA})),
\end{equation}
$\mathcal{D}_{N}$ denotes the set of density operators with support
in $\operatorname{span}\{|0\rangle,|1\rangle,\ldots,|N\rangle\}$, and $D(p\|q)$
denotes the relative entropy of $p$ and $q$, defined as
\begin{equation}
D(p\|q)\coloneqq\int_{-\pi}^{\pi}d\phi\,p(\phi)\ln\!\left(\frac{p(\phi)}{q(\phi)}\right).
\end{equation}
\end{proposition}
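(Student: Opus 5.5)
The lower bound in~\eqref{eq:trunc-bnd} is immediate. Every $\psi_{RA}$ with $\psi_A\in\mathcal{D}_N$ and $\Tr[\hat n\psi_A]\le E$ is feasible for $\widehat{D}_{\hat n,E}$, so the truncated supremum runs over a subset of the untruncated one. For the upper bound, the plan is to work with the closed form of Lemma~\ref{lem:ECBS_op_rel_ent_form}. Write $K\coloneqq \Tr_B[D_{\operatorname{op}}(J^{\mathcal N}_{RB}\|J^{\mathcal M}_{RB})]$ and $K^{(N)}$ for the analogous operator built from the Choi operators compressed to $\operatorname{span}\{|0\rangle,\ldots,|N\rangle\}$; the same argument as in Lemma~\ref{lem:ECBS_op_rel_ent_form} applied on the truncated space gives $\widehat D_{\hat n,E,N}=\sup\{\Tr[\rho K^{(N)}]:\rho\in\mathcal D_N,\ \Tr[\hat n\rho]\le E\}$. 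The dephasing Choi operator is supported on $\operatorname{span}\{|mm\rangle\}\cong\ell^2(\mathbb N)$, where it acts as the Toeplitz matrix $[J^{p}]_{mn}=\int p(\phi)e^{-i(m-n)\phi}d\phi$. Hence $D_{\operatorname{op}}$ also lives on this span, and $K=\sum_n d_n|n\rangle\!\langle n|$ is Fock-diagonal with $d_n=\langle nn|D_{\operatorname{op}}(J^{\mathcal N}\|J^{\mathcal M})|nn\rangle$; likewise for $K^{(N)}$ with entries $d^{(N)}_n$. In particular, the optimal $\rho$ in both problems may be taken Fock-diagonal, which is consistent with the phase-covariance remark in the main text.

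The key analytic estimate is $|d_n|,|d_n^{(N)}|\le D(p\|q)$. For the upper side, I will write $J^{p}=V^\dagger M_pV$, where $V:\ell^2(\mathbb N)\to L^2([-\pi,\pi])$ is the isometry $|n\rangle\mapsto e^{-in\phi}/\sqrt{2\pi}$ and $M_p$ is multiplication by $2\pi p$. I will then apply the transformer inequality~\eqref{eq:mop} with $M=V^\dagger$, or with $M=PV^\dagger$ for the truncated case. Since $D_{\operatorname{op}}(M_p\|M_q)=M_{2\pi p\ln(p/q)}$ commutes and is computed pointwise, this gives $d_n\le \int p\ln(p/q)=D(p\|q)$. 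For a lower bound on $d_n$ I would use operator concavity-type estimates, e.g.\ $D_{\operatorname{op}}(X\|Y)\ge X-Y$ in the sense of trace against states; in particular $d_0=0$, because the vacuum block is $1$ for both channels. Failing a clean two-sided bound, the factor $2$ in~\eqref{eq:trunc-bnd} is exactly the slack needed if one only has $|d_n|\le D(p\|q)$.

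Next comes a truncation of the optimizer. Let $\rho=\sum_n r_n|n\rangle\!\langle n|$ be (near-)optimal for $\widehat D_{\hat n,E}$. By Markov's inequality the tail weight satisfies $t\coloneqq\sum_{n>N}r_n\le E/(N+1)$. Set $\rho'\coloneqq\sum_{n\le N}r_n|n\rangle\!\langle n| + t|0\rangle\!\langle 0|$, i.e., dump the tail into the vacuum. Then $\rho'\in\mathcal D_N$ and $\Tr[\hat n\rho']\le\Tr[\hat n\rho]\le E$, so $\rho'$ is feasible for the truncated problem. Moreover,
\begin{equation}
\Tr[\rho K]-\Tr[\rho' K]=\sum_{n>N}r_n(d_n-d_0)\le t\cdot 2D(p\|q)\le \frac{2E}{N+1}D(p\|q).
\end{equation}

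The remaining step, and the main obstacle I expect, is to compare $\Tr[\rho'K]$ with the truncated value $\Tr[\rho'K^{(N)}]$. The transformer inequality gives $K^{(N)}\le PKP$, which points the wrong way. I plan to handle this directly at the level of states rather than through $K$. For $\rho'\in\mathcal D_N$, both output states $(\operatorname{id}\otimes\mathcal N)(\psi'_{RA})$ and $(\operatorname{id}\otimes\mathcal M)(\psi'_{RA})$ lie in the truncated space, because dephasing does not change photon number. Hence the Belavkin--Staszewski value of this input computed with the full Choi operators coincides with the one computed from the compressed Choi operators. I would make this rigorous by using the equality case of~\eqref{eq:mop} with $\sqrt{\rho'^T}$ invertible on $\operatorname{span}\{|0\rangle,\ldots,|N\rangle\}$, plus continuity as in Lemma~\ref{lem:ECBS_op_rel_ent_form}. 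This identifies the value at $\rho'$ with $\Tr[\rho'K^{(N)}]\le\widehat D_{\hat n,E,N}$. If that identification fails, the fallback is to run the whole tail-dumping argument directly on the state-level quantity $\widehat D(\mathcal N(\psi)\|\mathcal M(\psi))$, using the block-diagonal decomposition by photon-number sectors of the Fock-diagonal purification.
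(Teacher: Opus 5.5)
\textbf{Your last step has a genuine gap, and it cannot be closed.} After dumping the tail you hold $\Tr[\rho' K]$, where $K$ is built from the \emph{untruncated} Choi operators, and you need $\Tr[\rho'K]\le\widehat D_{\hat n,E,N}$.

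The equality case of \eqref{eq:mop}, applied with $\sqrt{(\rho')^T}$ invertible only on $\operatorname{span}\{\ket{0},\ldots,\ket{N}\}$, shows that the Belavkin--Staszewski value of the input $\rho'$ is $\Tr[\rho'K^{(N)}]$. It does not show that this equals $\Tr[\rho'K]$. The identity ``value $=\Tr[\rho K]$'' holds only for $\rho$ of full rank on the whole Fock space. So $\Tr[\rho'K]$ is merely the limit of values at full-rank inputs converging to $\rho'$, and $\widehat D$ is discontinuous at such boundary points.

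Applying \eqref{eq:mop} with $M=\Pi_N\otimes\openone$ gives $K^{(N)}\le\Pi_NK\Pi_N$, generally with strict inequality. This is exactly the wrong direction you already spotted. That the output states are the same whichever Choi operators you use is true but beside the point. The state-level fallback runs into the same discontinuity.

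A smaller point: your remark $d_0=0$ also fails for this reason, since it holds for $K^{(0)}$, not for $K$. It is harmless, though. Applying \eqref{eq:mop} with $M=\bra{nn}$ gives $d_n\ge 0$, so the tail costs at most $\frac{E}{N+1}D(p\|q)$, without the factor $2$.

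\textbf{The second inequality in \eqref{eq:trunc-bnd} is false for small $E$, so no route can work.} Take $p\equiv 1/(2\pi)$ and $q(\phi)=(1+2c\cos\phi)/(2\pi)$ with $0<c<\tfrac12$. For the input $\sum_{n=0}^{L}\sqrt{r_n}\ket{nn}$ with all $r_n>0$, the outputs are $S^2$ and $STS$ on $\operatorname{span}\{\ket{nn}\}_{n\le L}$. Here $S=\operatorname{diag}(\sqrt{r_n})$ and $T$ is the $(L+1)\times(L+1)$ tridiagonal matrix with unit diagonal and off-diagonal entries $c$. Hence the value is $\widehat D=\sum_n r_n(-\ln T)_{nn}$.

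For $L=1$, both diagonal entries of $-\ln T$ equal $-\tfrac12\ln(1-c^2)$, so the value is $-\tfrac12\ln(1-c^2)$ for every $r_1\in(0,1)$. This already shows that $\widehat D_{\hat n,E}$ does not vanish as $E\to0$. Applying \eqref{eq:BS-rel-ent-D-op} to the qubit restrictions of the two channels, the same computation gives $\widehat D_{\hat n,E,1}\le-\tfrac12\ln(1-c^2)$.

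For $L=2$, the diagonal entries of $-\ln T$ are $-\tfrac14\ln(1-2c^2)$, $-\tfrac12\ln(1-2c^2)$ and $-\tfrac14\ln(1-2c^2)$. Choosing $r_1,r_2>0$ with $r_1+2r_2\le E$ then gives, for every $E>0$,
\begin{equation*}
\widehat D_{\hat n,E}-\widehat D_{\hat n,E,1}\ \ge\ \tfrac14\ln\frac{(1-c^2)^2}{1-2c^2}\ >\ 0 .
\end{equation*}
By contrast, for $N=1$ the correction term is $\frac{2E}{N+1}D(p\|q)=E\ln\frac{2}{1+\sqrt{1-4c^2}}$, which tends to $0$. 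For $c=0.4$ the gap is about $0.0092$ and $D(p\|q)=\ln\frac54\approx 0.223$, so \eqref{eq:trunc-bnd} fails for all $E<0.04$.

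\textbf{The paper's proof has the same hole.} Its first display writes $\widehat D_{\hat n,E,N}$ as $\sup_{\rho\in\mathcal D_N}\Tr[\rho K]$ with the full Choi operators, which is precisely the identification above. It then compares $\sigma$ with $\Pi_N\sigma\Pi_N/\Tr[\Pi_N\sigma]$ and applies H\"older's inequality with $\|K\|_\infty$, which is where its factor $2$ comes from. Both arguments actually prove the bound with $\sup\{\Tr[\rho K]:\rho\in\mathcal D_N,\ \Tr[\hat n\rho]\le E\}$ in place of $\widehat D_{\hat n,E,N}$. That quantity is not the truncated divergence, and it is not what a truncated SDP computes.
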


\begin{proof}
The first inequality in \eqref{eq:trunc-bnd} follows because the set $\mathcal{D}_N$ is contained in the set of all density operators.

To see the other inequality in \eqref{eq:trunc-bnd}, 
consider that
\begin{align}
  \widehat{D}_{\hat{n},E,N}(\mathcal{N}\|\mathcal{M})
 & =\sup_{\substack{\rho_{R}:\Tr[\hat{n}_{R}\rho_{R}]\leq E,\\
\rho_{R}\in\mathcal{D}_{N}
}
}\Tr\!\left[\left(\rho_{R}\otimes \openone_{B}\right)D_{\mathrm{op}}(J_{RB}^{\mathcal{N}}\|J_{RB}^{\mathcal{M}})\right]\\
 & =\sup_{\substack{\rho_{R}:\Tr[\hat{n}_{R}\rho_{R}]\leq E,\\
\rho_{R}\in\mathcal{D}_{N}
}
}\Tr\!\left[\rho_{R}\Tr_{B}\!\left[D_{\mathrm{op}}(J_{RB}^{\mathcal{N}}\|J_{RB}^{\mathcal{M}})\right]\right].
\end{align}
by the same reasoning that led to \eqref{eq:BS-rel-ent-D-op}.

Let $\sigma$ be an arbitrary number-diagonal state satisfying $\Tr[\hat{n}\sigma]\leq E$ (indeed, as argued previously, number-diagonal states are optimal for this optimization problem, due to the phase-covariance symmetric of bosonic dephasing channels).
Let $\sigma^{N}\coloneqq\frac{\Pi_{N}\sigma\Pi_{N}}{\Tr[\Pi_{N}\sigma]}$,
where $\Pi_{N}$ is the projection onto $\operatorname{span}\{|0\rangle,|1\rangle,\ldots,|N\rangle\}$.
Consider that
\begin{align}
\frac{1}{2}\left\Vert \sigma-\sigma_{N}\right\Vert _{1} & =\frac{1}{2}\left\Vert \sigma-\frac{\Pi_{N}\sigma\Pi_{N}}{\Tr[\Pi_{N}\sigma]}\right\Vert _{1}\\
 & =\frac{1}{2}\left\Vert \sigma-\Pi_{N}\sigma\Pi_{N}+\Pi_{N}\sigma\Pi_{N}-\frac{\Pi_{N}\sigma\Pi_{N}}{\Tr[\Pi_{N}\sigma]}\right\Vert _{1}\\
 & \leq\frac{1}{2}\left\Vert \sigma-\Pi_{N}\sigma\Pi_{N}\right\Vert _{1}+\frac{1}{2}\left\Vert \Pi_{N}\sigma\Pi_{N}-\frac{\Pi_{N}\sigma\Pi_{N}}{\Tr[\Pi_{N}\sigma]}\right\Vert _{1}\\
 & =\frac{1}{2}\Tr[(I-\Pi_{N})\sigma]+\frac{1}{2}\left|1-\frac{1}{\Tr[\Pi_{N}\sigma]}\right|\left\Vert \Pi_{N}\sigma\Pi_{N}\right\Vert _{1}\\
 & =\frac{1}{2}\Tr[(I-\Pi_{N})\sigma]+\frac{1}{2}\left|1-\frac{1}{\Tr[\Pi_{N}\sigma]}\right|\Tr[\Pi_{N}\sigma]\\
 & =\frac{1}{2}\Tr[(I-\Pi_{N})\sigma]+\frac{1}{2}\left|\Tr[\Pi_{N}\sigma]-1\right|\\
 & =\Tr[(I-\Pi_{N})\sigma]\\
 & \leq\frac{E}{N+1},
\end{align}
where the last inequality follows because $\hat{n}\geq\left(N+1\right)\left(I-\Pi_{N}\right)$,
which implies that $\Tr[\hat{n}\sigma]\geq\left(N+1\right)\Tr[\left(I-\Pi_{N}\right)\sigma]$.
An application of the H\"older inequality implies that
\begin{align}
  \Tr\!\left[\sigma_{R}\Tr_{B}\!\left[D_{\mathrm{op}}(J_{RB}^{\mathcal{N}}\|J_{RB}^{\mathcal{M}})\right]\right]
 & \leq\Tr\!\left[\sigma_{R}^{N}\Tr_{B}\!\left[D_{\mathrm{op}}(J_{RB}^{\mathcal{N}}\|J_{RB}^{\mathcal{M}})\right]\right]+\frac{2E}{N+1}\left\Vert \Tr_{B}\!\left[D_{\mathrm{op}}(J_{RB}^{\mathcal{N}}\|J_{RB}^{\mathcal{M}})\right]\right\Vert _{\infty}\\
 & =\Tr\!\left[\sigma_{R}^{N}\Tr_{B}\!\left[D_{\mathrm{op}}(J_{RB}^{\mathcal{N}}\|J_{RB}^{\mathcal{M}})\right]\right]+\frac{2E}{N+1}\widehat{D}(\mathcal{N}\|\mathcal{M})\\
 & \leq\widehat{D}_{\hat{n},E,N}(\mathcal{N}\|\mathcal{M})+\frac{2E}{N+1}D(p\|q).
\end{align}
The last inequality above follows because $\widehat{D}(\mathcal{N}\|\mathcal{M}) = D(p\|q)$ for all bosonic dephasing channels, as a consequence of the arguments of \cite{huang2024exact} (indeed the same arguments from Section~V therein apply to the Belavkin--Staszewski divergence). 
Since the bound above applies to an arbitrary number-diagonal state $\sigma$ satisfying
$\Tr[\hat{n}\sigma]\leq E$, we conclude \eqref{eq:trunc-bnd} after taking
the supremum over all such states.
\end{proof}

\medskip

While it would be ideal to have a similar kind of truncation error bound for pure-loss bosonic channels, an argument along the lines above is not applicable: indeed, the unconstrained Belavkin--Staszewski divergence of two distinct pure-loss channels (i.e., with transmissivity $\eta_1, \eta_2 \in [0,1]$ such that $\eta_1 \neq \eta_2$) is infinite, as a consequence of sending in coherent states of unbounded energy. A similar construction was used in \cite{winter2017} to argue that the diamond distance without energy constraints is equal to its maximum value of two.

\end{document}